\documentclass[journal,onecolumn,11pt]{IEEEtran}
\usepackage[doublespacing,nodisplayskipstretch]{setspace}
\usepackage[utf8]{inputenc} 
\usepackage[T1]{fontenc}
\usepackage{url}
\usepackage{ifthen}
\usepackage{cite}
\usepackage[cmex10]{amsmath} 

\usepackage{eulervm}
\usepackage{mystyle}
\usepackage{xcolor}
\usepackage{graphicx}
\usepackage{physics}
\usepackage{amssymb}
\usepackage{comment}
\usepackage{amsthm}
\usepackage{subcaption}

\usepackage{thmtools}
\declaretheorem[name=Theorem, numberwithin=section]{theorem}

\newtheorem{remark}{Remark}
\newtheorem{example}{Example}
\newtheorem{definition}{Definition}

\newcommand{\mnew}[1]{\color{blue}#1\color{black}}

\newcommand{\neswarrow}{\mathrel{\text{$\nearrow$\llap{$\swarrow$}}}}
\newcommand{\nwsearrow}{\mathrel{\text{$\nwarrow$\llap{$\searrow$}}}}

\makeatother

\begin{document}
\title{One-at-a-Time Quantum Guessing: \\[-1ex]
Multipartite Entanglement Beyond MoE Games}

\author{Michael Schleppy
and Emina Soljanin
\thanks{M.~Schleppy and E.~Soljanin are with the Department of Electrical and Computer Engineering, Rutgers, the State University of New Jersey, Piscataway, NJ 08854, USA, e-mail: \{michael.schleppy, emina.soljanin\}@rutgers.edu).}
\thanks{This material is based upon work supported by the National Science Foundation Graduate
Research Fellowship Program under Grant No. 2233066. Any opinions,
findings, and conclusions or recommendations expressed in this material are those of the
author(s) and do not necessarily reflect the views of the National Science Foundation.}
}

\IEEEoverridecommandlockouts

\maketitle

\begin{abstract}
Multipartite entanglement remains a challenging and not fully understood aspect of quantum information. Monogamy-of-Entanglement (MoE) games have been highly effective for studying limitations on the usefulness of entanglement imposed by monogamy constraints. To better reveal the extent to which multipartite entanglement can be useful, we introduce a class of quantum guessing games, termed One-at-a-Time Guessing (OTG) games. In these games, quantum players individually guess the outcomes of random measurements performed by a referee on a pre-shared entangled state. Unlike MoE games, OTG games select players individually at random according to a specified probability distribution, thereby probing each player's correlation with the referee.
We show that, despite monogamy constraints, players sharing certain entangled states can moderately outperform those relying only on classical uncertainty. This advantage arises even in simple OTG games involving only Pauli measurements on qubits, where optimal entanglement increases the winning probability by at least 4\%. This contrasts with MoE games, where shared entanglement has been shown in several settings to provide only limited (if any) advantage over classical strategies.
We further establish a majorization property: the value of an OTG game respects the majorization ordering of the player-selection probability distribution. We also analyze in detail a two-player OTG game in which the referee measures one of the three Pauli observables on a qubit, and show that it is optimally played using a specific parameterized family of three-qubit $W$-like states. These results suggest that OTG games provide a useful framework for investigating the usefulness of multipartite entanglement in multiparty quantum correlations.

\end{abstract}

\begin{IEEEkeywords}
\noindent Guessing, Monogamy-of-Entanglement, Non-Local Games, Quantum Correlations, Quantum State Discrimination
\end{IEEEkeywords}

\newpage

\section{Introduction}

Consider the following scenario: Alice and Bob share a pair of photons which are maximally entangled, and Alice measures her photon using a single-photon detector. She measures the polarization of her photon in either one of two randomly selected axes, either  $\theta_1$ or $\theta_2$, which differ by a rotation of $45^{\circ}$. The outcome of her measurement will be polarization in either the vertical ($\updownarrow$) or horizontal $(\leftrightarrow)$ directions in case axis $\theta_1$ is chosen, or polarization in a diagonal direction, $(\neswarrow)$ or $(\nwsearrow)$, in case axis $\theta_2$ is chosen. Naturally, Bob would like to learn the outcome of Alice's measurement, and the entanglement of their photons permits him to do so. If, for instance, the entangled state of the photons was initially $\ket{\psi}=\frac{\ket{\uparrow\uparrow}+\ket{\downarrow\downarrow}}{\sqrt{2}}$, and Bob measures the polarization of his photon using the same axis as Alice, his outcome will be perfectly correlated with Alice's outcome. The perfect correlation of their outcomes holds regardless of which axis Alice originally measured her photon, and is a feature of maximal entanglement.

Now, suppose that Bob is not the only party who is interested in Alice's outcome, but also Charlie. The three parties share a triplet of entangled photons, and once again, Alice measures the polarization of her photon randomly in one of the axes $\theta_1,\theta_2$. How well can Bob and Charlie \textit{individually} learn Alice's measurement outcome if they cannot communicate with each other? The situation is not as simple as before, since it is impossible for both Bob and Charlie's photons to be maximally entangled with Alice's photon simultaneously. Consequently, they will have to accept some inevitable error in their guesses of Alice's measurement outcome. A natural follow-up goal, then, is to determine the minimum amount of error that Bob and Charlie individually observe, if they are selected to guess Alice's outcome at random. A further goal is to identify which tripartite states that enable Bob and Charlie to guess optimally, and to determine, if at all, how entanglement can benefit Bob and Charlie in spite of MoE. This is the premise of One-at-a-Time Guessing (OTG) games, the subject of this paper.

Guessing is a fundamental task with broad applications in information theory and cryptography. As a simple example, the security of user authentication in computer networks is influenced by the relative strength of user credentials; the harder it is to guess a password, the harder it is for an adversary to gain unauthorized access. In this view, the security of information is intrinsically linked to how difficult it is to guess. How then, do we measure the difficulty of guessing information? In practice, we model the secret information $X$ as a random variable, but we still must determine \textit{how} guesses are allowed to be made. This brings about two scenarios: iterative guessing and one-shot guessing. In the former, one seeks to determine the number of guesses required to determine $X$ exactly or within some accuracy, whilst receiving \textit{feedback} between guesses (e.g. `yes' and `no'). In the latter, one measures the difficulty of guessing by looking at the most probable outcome of the random variable $X$. Both settings relate to some \textit{entropic measure} of $X$, the R\'enyi entropy of order $\frac{1}{2}$ and the min-entropy, respectively. The latter scenario is the focus of this paper (for more information on iterative guessing, see \cite{Guess:Massey94,Guess:Arikan96,Guess:Arikan98,Guess:Malone04,Guess:Sundaresan06,Guess:Hanawal10,Guess:Christiansen12,Guess:Christiansen15,Guess:Hanson21,Guess:Rioul22}).

The one-shot guessing scenario derives its name from \textit{one-shot information theory}, in which broadly speaking, resources (e.g. channels, quantum states, measurements) are limited to one use. A concrete example of such an information-theoretic task is one-shot hypothesis testing. In (parametric) hypothesis testing, the goal is to decide between a selection of hypothesis $x \in \mcX$ based upon some observation $y \in \mcY$, where the distribution of $y$ is determined by the parameter $x$. In the Bayesian setting, we assume further that the hypotheses $x \in \mcX$ are distributed according to some prior distribution $p_X(x)$. The performance of a Bayesian hypothesis test under some decision rule $\delta:\mcY \to \mcX$ is measured by the probability of misidentification, i.e. $\Pr[X \neq \delta(Y)]$ (see \cite{HT:Moulin18} for a general overview of hypothesis testing). In the quantum setting, hypothesis tests frequently take the form of \textit{quantum state discrimination} tasks, where the goal is to distinguish between ensembles of quantum states with maximal probability using some measurement. The origins of quantum state discrimination are due to Helstrom \cite{HT:Helstrom69} (see \cite{HT:Barnett09,HT:Bae15} for an overview of quantum state discrimination, also \cite[Chapter 3.1]{HT:Watrous18}). The optimal ability to guess a random variable $X$, perhaps with some random observation $Y$, is innately tied to the optimal accuracy of the corresponding one-shot hypothesis test.

One-shot guessing is also particularly relevant in scenarios where communication is limited (or outright impossible), due to communication cost, latency, bandwidth, privacy, etc. In this setting, the principal objects of study are the \textit{correlations} between distant parties. When we discuss correlations between two distant parties, we refer to the joint distribution $p(x,y|a,b)$ between two random variables $X$ and $Y$, conditioned on some local observations $A$ and $B$ respectively. The chief task in the study of correlation sets is to determine which correlations $p(x,y|a,b)$ are realizable under various non-communicating resource constraints (i.e. shared randomness, entanglement, commuting observables, non-signaling). A famous result due to Bell asserts that there exist realizable correlations in quantum mechanics, that are not achievable using classical resources (i.e. shared randomness) \cite{Correlation:Bell64}.

The power of quantum correlations over classical correlations is enabled by the properties of \textit{entanglement} and \textit{steering} in quantum systems. Informally, a system of distant quantum particles is entangled when its state cannot be determined by local descriptions of each particle, and steering is the ability for measurement outcomes on one particle to influence the state of another particle via entanglement. Crucially, entanglement is most pronounced and measurable on bipartite systems. This is encapsulated by the fact that the outcome $X$ of any noiseless measurement\footnote{By noiseless measurement, we mean a quantum measurement using a projection-valued measure.} on one subsystem can be perfectly replicated (or guessed) on another subsystem, so long as they share a maximally entangled state. There is no way for a third party to replicate $X$ sufficiently well without weakening the correlations between the first two parties, a fact upon which the security of quantum key distribution lies. This notion is further strengthened by the principle of \textit{Monogamy-of-Entanglement}, which informally states that if the states of two systems $A$ and $B$ are maximally entangled, then the third system $C$ must be completely unentangled from both $A$ and $B$. This certainly does not preclude the existence of states which contain entanglement between three or more subsystems (notable examples include the GHZ state and W state), but it does prevent systems of three or more parties from sharing a single state that, when restricted to two subsystems, functions exactly like a maximally entangled state.

\subsection{Non-Local Games and Monogamy-of-Entanglement Games}
A powerful tool for analyzing the abilities of different correlation types under resource constraints is the theory of non-local games (and variants thereof). A non-local game can be thought of as an experiment in which a referee sends random `questions' to a group of non-communicating players, who perform local actions (measurements, computations, etc.) to produce `answers' that are submitted to and evaluated by the referee. The players are awarded a score depending upon the questions dispensed and answers received. The performance of the players is measured by their expected score.

Non-local games have been studied extensively across areas in physics and computer science. The first notion of non-local games were experiments proposed to test (and eventually disprove) local hidden-variable theories, including experiments regarding the CHSH inequality \cite{NLG:Clauser69}. Later, the non-local game framework gained traction in complexity theory, in which non-local games are central in a number of complexity class models such as MIP and MIP* (see e.g. \cite{NLG:Vidick16}). Finally, non-local games serve as an invaluable tool for device-independent quantum cryptography, where games sere as tools used to rule out the presence of eavesdroppers, and certify the security of a generated key from a key distribution protocol using observed statistics. For a comprehensive overview on non-local games in device independent quantum cryptography, see \cite{DIQKD:Zapatero23,DIQKD:Primaatmaja23}. For an overview of quantum correlations in nonlocal games, refer to \cite{NLG:Palazuelos16}.

In non-local games, the role of the referee is purely classical: sending/receiving classical information and evaluating a score function. Recent works have investigated non-local games in which the referee has quantum capabilities, with one such approach being extended non-local games \cite{MOE:Johnston16,MOE:Russo17}. Instead of evaluating a score function, the referee measures an observable on part of a shared state based upon the configuration of questions/answers. An important subclass of extended non-local games are \textit{monogamy-of-entanglement games} \cite{MoE:Tomamichel13}, in which the referee performs a random measurement, whose outcome is to be simultaneously guessed by non-communicating players. Monogamy-of-entanglement (MoE) games naturally model a plethora of scenarios in quantum communications (e.g. the referee is a trusted party in QKD), and have been recently adopted as a valuable tool in unclonable cryptography \cite{MOE:Broadbent19,MOE:Coladangelo21,MOE:Ananth23,MOE:Botteron24,MOE:Poremba24}. Notably, MoE games are games of equal information; all questions given to players are identical (compare with non-local games, where the complexity is generally born out of giving players independently random questions). 

\subsection{Our Contributions and Paper Organization}
In this paper, we introduce a new variant of extended non-local games which emphasizes \textit{individual} players' ability to guess the referee's measurement outcome. We refer to these games as \textit{One-at-a-Time Quantum Guessing games}, or OTG games for short. The difference between MoE games and our proposed OTG games lies in who makes guesses: in MoE games, all players must \textit{simultaneously} guess the measurement outcome of the referee's measurement, whereas in OTG games, only one player must make a guess at a given time. The individual player is selected from a known probability distribution, and the players' goal is to choose a shared state and measurements that optimizes the average ability of players to guess the referee's measurement outcome.

The motivation for OTG games is multi-fold. Firstly, by restricting the focus to players' individual guessing ability, we de-emphasize the somewhat strict requirement that all players simultaneously agree on answers. In doing so, we shift the focus onto bipartite correlations between the referee and individual players, and consequently, identify states which optimally balance entanglement from the referee to each player. For instance, OTG games may help to reason about entanglement resource allocation, particularly in scenarios where the referee must decide in real-time to perform a protocol (i.e. QKD, quantum teleportation) with a player selected at random, using pre-shared entanglement. Secondly, OTG games are a new tool to investigate the role of \textit{nonlocality} in hypothesis testing. When observations are fundamentally mediated to observers through shared states and entanglement, the effects of nonlocality become apparent (i.e. how much of the state does an observer posses, and how does that affect their ability to distinguish hypotheses). OTG games, along with other related games, both propose new foundational questions and offer new tools for understanding the role of nonlocality in hypothesis testing. 


Our paper is organized as follows: in Section~\ref{sec:prelim}, we review notation and preliminaries relevant to OTG games and our results. In Section~\ref{sec:OTG_Games}, we formally introduce OTG games and related notions, compare them against MoE games, and state a variety of results regarding the optimal winning probability of each. Crucially, we state a majorization result relating the optimal guessing probability of an OTG game to its player selection probability. In Section~\ref{sec:PauliOTG}, we investigate a specific OTG game where the referee uses measurements in Pauli observables on a single qubit. We classify the optimal guessing probability as a function of the player selection probability, and identify a family of $W$-like tripartite states which help the players guess optimally in the OTG game. The optimality of the states and measurements are certified using semidefinite programs from the NPA hierarchy. Proofs and auxiliary material are relegated to the Appendix.

\begin{figure}
    \centering
    \includegraphics[width=0.8\linewidth]{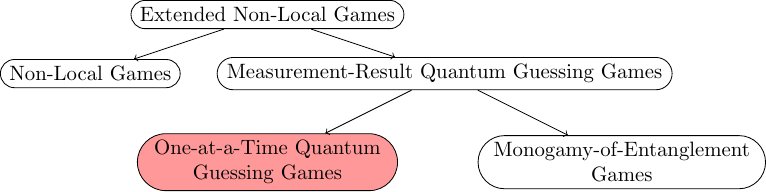}
    \caption{A hierarchy of related quantum games. Arrows denote inclusion (e.g. measurement-result quantum guessing games are a subclass of extended non-local games). We propose the variant of one-at-a-time quantum guessing games.}
    \label{fig:gameHierarchy}
\end{figure}


\section{Preliminaries: Notation and Background Material}\label{sec:prelim}

Throughout this paper, $\mcH$ will denote a finite-dimensional complex Hilbert space. Subscripts will be used to distinguish between different systems, e.g., $\mcH_A$ for Alice, $\mcH_B$ for Bob, etc. When we discuss composite systems, we use the shorthand notation $\mcH_{A_1A_2...A_n} = \mcH_{A_1} \otimes \mcH_{A_2} \otimes \ldots \mcH_{A_n}$. A particular component $A_i$ of a composite system on $\mcH_{A_1...A_n}$ will be referred to as either a subsystem or a register, interchangeably. In case a particular register $A_i$ holds only classical information (as opposed to quantum information), we may write $C_{A_i}$ instead of $\mcH_{A_i}$. We will denote the set of positive operators on $\mcH$ by $\mcP(\mcH)$. The identity operator on $\mcH$ will be denoted by $\bbI_{\mcH}$, or simply $\bbI$ if the underlying Hilbert space is clear. The adjoint of an operator $X$ on $\mcH$ will be denoted as $X^\dagger$. Familiarity with Dirac notation is assumed.

A density operator $\rho:\mcH \to \mcH$ on a Hilbert state $\mcH$ is defined to be a positive-semidefinite operator on $\mcH$ with unit trace. We will refer to density operators as states, and we denote the set of such states on a Hilbert space as $\mcD(\mcH)$. A pure state is a state with rank one, and has the form $\ket{\psi}\bra{\psi}$ for some normalized vector $\ket{\psi} \in \mcH$. By the spectral theorem, any state $\rho \in \mcD(\mcH)$ is a convex combination of pure states, namely 
\[\rho = \sum_{j \in J} p_j \ket{\psi_j}\bra{\psi_j}\]
where $p_j \geq 0, \sum_j p_j=1$, and $\ket{\psi_j} \in \mcH$ are normalized vectors, and $|J|\leq \dim(\mcH)$. Occasionally, we may want to refer to classical-quantum (cq) states, which are those states which are defined on a composite system $C \otimes \mcH$ containing a classical register and a quantum register. A cq state has the form
\[\rho_{CQ} = \sum_{j\in J} p_j \ket{j}\bra{j} \otimes \rho_j\]
where $\{\ket{j}\}_{j \in J}$ is a finite, fixed orthonormal basis of $C$, $p_j$ is a probability distribution on $J$, and $\rho_j \in \mcD(\mcH)$. The set of cq states on $C \otimes \mcH$ will be denoted by $\mcD(C \otimes \mcH)$, and the set of positive operators with a classical register $C$ and a quantum register $Q$ on a Hilbert space $\mcH$ will be denoted by $\mcP(C\otimes \mcH)$. A state on a composite system may be written with subscripts, e.g. a state on $\mcH_{A_1A_2...A_n}$ may be denoted by $\rho_{A_1A_2...A_n}$. We denote the partial trace of a state $\rho_{A_1A_2...A_n}$ on a composite system by omitting subscripts. For instance, if $I=\{i_1,i_2,\ldots,i_k\} \subseteq \{1,2,\ldots,n\}$, then $\rho_{A_{i_1}A_{i_2}...A_{i_k}}$ is the reduced state of $\rho_{A_1A_2...A_n}$ on the Hilbert space $\mcH_{A_{i_1}A_{i_2}...A_{i_k}}$, and is obtained by tracing out the subsystems $A_i$ with $i \in I^c$.


A positive operator-valued measure (POVM) on $\mcH$ with outcomes in a finite set $\mcX$ is a mapping $\mcM:\mcX \to \mcP(\mcH)$ such that $\sum_{x \in \mcX} \mcM_x = \bbI$. In case all $\mcM_x$'s are orthogonal projections (i.e. $\mcM_x^2 = \mcM_x^\dagger = \mcM_x$), we say that $\mcM$ is a projection-valued measure (PVM). Furthermore, in case all projections are rank one (i.e. $\mcM_x=\ket{\psi_x}\bra{\psi_x}$), we say that $\mcM$ is a basis measurement in the basis $\{\ket{\psi_x}\}_{x \in \mcX}$. For each state $\rho \in \mcD(\mcH)$, a POVM induces a set of measurement statistics $p=(p_x:x \in \mcX) \in \bbR^{|\mcX|}$, where $p_x=\Tr\left[\mcM_x \rho\right]$. It follows immediately from the definition of a POVM that $p$ is a probability distribution on $\mcX$.

A \textit{qubit} is represented by a state on a two-dimensional Hilbert space $\mcH=\bbC^2$. The space of Hermitian linear operators on $\bbC^2$ is a real Hilbert space with inner product $\langle X,Y\rangle = \Tr[XY]$, and an orthogonal basis given by the Pauli operators.
\[\bbI = \begin{pmatrix} 1 & 0\\0 & 1 \end{pmatrix} ~~~ 
     X = \begin{pmatrix} 0 & 1\\1 & 0 \end{pmatrix} ~~~
     Y = \begin{pmatrix} 0 & -i\\i & 0 \end{pmatrix} ~~~
     Z = \begin{pmatrix} 1 & 0\\0 & -1 \end{pmatrix}\]

Consequently, states and measurements on qubits can be expressed via Pauli operators. Recall that any Hermitian matrix $H$ on $\mcH$ has an eigendecomposition $H=\sum_{i=1}^r\lambda_iP_i$, where $\{\lambda_i\}_{i=1}^r$ are distinct real numbers, and $\{P_i\}_{i=1}^r$ are orthogonal projections summing to identity. When we refer to a measurement $\mcM$ with observable $H$, we mean that $\mcM$ is a projection-valued measure with PVM elements given by the orthogonal projectors $P_i$ of $H=\sum_{i=1}^r\lambda_iP_i$. For qubits, we identify three distinguished binary-outcome measurements, which come from the Pauli observables.

\begin{equation*}
    \begin{array}{ccc}
    X=\ket{+}\bra{+}-\ket{-}\bra{-} ~~~ & Y=\ket{+i}\bra{+i} - \ket{-i}\bra{-i} ~~~ & Z = \ket{0}\bra{0} - \ket{1}\bra{1}\\ 
    \mcM_0 = \ket{+}\bra{+} & \mcM_0 = \ket{+i}\bra{+i} & \mcM_0 = \ket{0}\bra{0}\\
    \mcM_1 = \ket{-}\bra{-} & \mcM_1 = \ket{-i}\bra{-i} & \mcM_1 = \ket{1}\bra{1}\\
\end{array}
\end{equation*}

where $\ket{\pm} = \frac{\ket{0}\pm\ket{1}}{\sqrt{2}}$ and $\ket{\pm i}=\frac{\ket{0}\pm i\ket{1}}{\sqrt{2}}$ are the Hadamard basis and $Y$-eigenbasis, respectively. The basis $\{\ket{0},\ket{1}\}$ is called the computational basis.

Given a probability mass function (PMF) $p$ on a finite set $\mcX$, we write $p^{\downarrow} \in \bbR^{|\mcX|}$ to be the vector whose coefficients are non-increasing and are some permutation of the coefficients of $p$, namely
\[p^\downarrow = (p^\downarrow_1,p^\downarrow_2,\ldots,p^\downarrow_{|\mcX|}), \text{ such that } p^\downarrow_1 \geq p^\downarrow_2 \geq \ldots \geq p^\downarrow_{|\mcX|} \text{ and } p^\downarrow_x = p_{\sigma(x)} \text{ for some } \sigma \in S_\mcX\]
where $S_\mcX$ denotes the set of permutations on $\mcX$. Given two probability distributions $p,q$ on $\mcX$, we say that $p$ \textit{majorizes} $q$ if for every $k \in \{1,2,\ldots |\mcX|\}$, we have
\[\sum_{x=1}^k p^\downarrow_x \geq \sum_{x=1}^k q^\downarrow_x.\]
We write $p \succeq q$ to denote that $p$ majorizes $q$, and $p \preceq q$ to denote that $q$ majorizes $p$. Majorization is a partial order on the probability simplex on $\mcX$, with a minimum element $p_{\text{Unif}} = \left(\frac{1}{|\mcX|},\ldots, \frac{1}{|\mcX|}\right)$ and maximal elements $\delta_x$, where $\delta_x$ is the indicator function on $\{x\}$ (see \cite{Majorization:Marshall79} for a reference on majorization).

The set of integers $\{1,2,\ldots,m\}$ will often be denoted by $[m]$. Given two integer-valued vectors $\underline{d},\underline{d}' \in \bbN^m$, we will often make use of the shorthand notation $\underline{d}\preceq \underline{d}'$ to mean that $d_i \leq d_i'$ for all $i \in [m]$. We denote the set of permutations on $[m]$ as $S_m$. The notation $\delta_{xy}$ will denote the Kronecker-delta function, i.e. $\delta_{xy}=1$ if $x=y$, else $\delta_{xy}=0$. 


\section{One-at-a-Time Guessing Games}\label{sec:OTG_Games}
We now proceed to the main subject of the paper, \textit{One-at-a-Time Guessing Games}. To motivate the OTG Game setting, we begin with an illustrative example, inspired by the scenario of the introduction.
\begin{example}\label{ex:OTG_with_GHZ}
    Alice, Bob, and Charlie each posses one out of a triplet of qubits, whose composite state $\ket{\psi}_{ABC}=\frac{1}{\sqrt{2}}\left(\ket{000}+\ket{111}\right) \in \mcH_{ABC} = \bbC^2 \otimes \bbC^2 \otimes \bbC^2$ is the GHZ state, and is known to everyone. Alice seeks to determine how well Bob or Charlie can guess the outcome of a measurement on her qubit. She selects one measurement $\mcM_b$ at random:
    \[\mcM_{0}=\bigl\{\ket{0}\bra{0},\ket{1}\bra{1}\bigr\} ~~~~ \text{or} ~~~~ \mcM_{1}=\bigl\{\ket{+}\bra{+},\ket{-}\bra{-}\bigr\}\]
    Alice records the measurement choice $b \in \{0,1\}$ as well as the measurement outcome $x \in \{0,1\}$. Now, she randomly designates one of Bob or Charlie as a guesser, whose goal is to determine $x$. To this end, Alice tells the designated guesser which measurement choice $b \in \{0,1\}$ she selected.

    Suppose Bob is the designated guesser. Bob will choose a measurement $\mcM'_b$ based upon the measurement choice $b$. If Alice's measurement choice is $b=0$, Bob computes the post-measurement states $\rho_x$ on his qubit for each $x \in \{0,1\}$, which are 
    \[\rho_x=\ket{x}\bra{x} \text{ with probability } \frac{1}{2}\]
    Consequently, a measurement in the computational basis is sufficient for Bob to guess $x$ perfectly. Now, suppose Alice's measurement choice is $b=1$. The post-measurement states $\rho_x$ on Bob's qubit are now
    \[\rho_x = \frac{1}{2}\bbI \text{ with probability } \frac{1}{2}\]
    In this case, no measurement will help Bob determine $x$ with probability greater than $\frac{1}{2}$. The situation is identical if Charlie were instead selected as the designated guesser.

\end{example}

The above example illustrates how measurements on subsystems of an entangled system can induce \textit{steering} on the other subsystems. When Alice measures her qubit in the computational basis, the post-measurement states on Bob and Charlie's qubits, conditioned on Alice's measurement outcome, are either $\ket{00}$ or $\ket{11}$, both of which can be distinguished by Bob or Charlie individually. When Alice measures in the Hadamard basis, the post-measurement states on Bob and Charlie's qubits, conditioned on Alice's measurement outcome, are either $\frac{1}{\sqrt{2}}\left(\ket{00}+\ket{11}\right)$ or $\frac{1}{\sqrt{2}}\left(\ket{00}-\ket{11}\right)$, neither of which are distinguishable by Bob or Charlie individually.

In general, extended non-local games involve two types of participants: a referee ($R$) and a set of players ($P_i$).

\begin{itemize}
    \item \textbf{Referee:} A referee ($R$) is an \textit{honest} agent who interacts with others in a \textit{characterized} manner. 
    \item \textbf{Player:} A player ($P_i$) is an agent who can strategize arbitrarily with other players.
\end{itemize}

The exact formulation of extended non-local games can be found in \cite{MOE:Russo17,MOE:Johnston16}. Here, we define a subclass of extended non-local games, called measurement-result quantumm guessing games, which generalize and provide a framework for comparing both the classes of MoE games and OTG games
\subsection{Measurement-Result Quantum Guessing Games}

In measurement-result quantum guessing games, the referee holds a register of fixed dimension, and performs random measurements $\mcM^\theta$ on their register according to a known distribution $p_{\Theta}$. The players aim to determine the referee's measurement outcome, based upon knowledge of the measurement choice $\theta$. Such games can typically be divided up into several phases:

\begin{itemize}
    \item \textbf{Preparation Phase:} The players $P_1,P_2,\ldots,P_m$ prepare a multi-partite state $\rho_{RP_1...P_m}$ to be shared amongst the referee and all players.
    \item \textbf{Referee Measurement Phase:} The referee measures their subsystem in some measurement $\mcM^\theta$ according to distribution $p_\Theta$, and receives measurement outcome $x$. The measurement choice $\theta$ is announced to all players.
    \item \textbf{Player Measurement Phase:} The players perform local measurements on their subsystems to obtain guesses $x_i$ for the referee's measurement outcome $x$, without communicating.
    \item \textbf{Verdict Phase:} The referee evaluates the guesses from a subset $I\subseteq [n]$, determining if $x_i=x$ for all $i \in I$. If all such players match the referee's measurement outcome, they win. Otherwise, they lose.
\end{itemize}

The novelty in OTG games lies specifically in the verdict phase. There are naturally two extreme cases for the evaluation of guesses: evaluating the entire set of players' guesses ($I=[m]$), or evaluating an individual player's guess randomly ($I=\{k\}$ for some random $k \in [m]$). MoE games cover the former case, whereas OTG games cover the latter case.

We now provide a formal definition for the components of a measuement-based quantum guessing game, outlined in the phases above.

\begin{definition}\label{def:ref_config}
    A \underline{Referee Configuration} is defined by the tuple of objects $ \mcR = (\mcH_R,\Theta,p_\Theta,\mcX,\{\mcM^\theta\}_{\theta \in \Theta})$, where

    \begin{itemize}
        \item $\mcH_R$ denotes the referee's Hilbert space (which is assumed to be finite dimensional).
        \item $\Theta$ denotes a (finite) parameter set. This set is also referred to as a set of questions.
        \item $p_\Theta$ is a probability distribution on $\Theta$.
        \item $\mcX$ denotes a (finite) set of outcomes. This set is also referred to as a set of answers.
        \item For each $\theta \in \Theta$, $\mcM^\theta$ is a POVM measurement on $\mcH_R$, with POVM elements $\mcM^\theta_x$ and outcomes in $\mcX$.
    \end{itemize}
\end{definition}

The Referee Configuration specifies exactly how the referee behaves during the referee measurement phase of the game: a measurement $\mcM^\theta$ is selected with probability $p_\Theta$, and the referee measures the register $\mcH_R$ with POVM $\mcM^\theta$ to get an outcome $x \in \mcX$. We therefore say that the referee's behavior is \textit{characterized}. In contrast, players can strategize before the game begins and agree to perform arbitrary measurements, in order to guess the referee's measurement outcome correctly. But first, the players must know when and how they are selected to produce guesses. This is where the evaluation procedure for the game becomes relevant.

\begin{definition}\label{def:eval_proc}
    An \underline{Evaluation Procedure} for $m$ players is a tuple of objects $\mcE = (\mcI,q_I)$ such that

    \begin{itemize}
        \item $\mcI$ is a collection of subsets of $[m]$, called the player query set, denoting sets of players that can be asked to simultaneously guess.
        \item $q_I$ is a probability mass function on $\mcI$, called the selection function.
    \end{itemize}
\end{definition}

The evaluation procedure serves to determine which subsets of players must simultaneously guess the referee's measurement outcome, along with their corresponding frequency. Specifically, when a subset of players $A = \{i_1,i_2,\ldots,i_k\} \in \mcI$ is chosen with probability $q_I(A)$, each player $i_j \in A$ performs a local measurement on their subsystem to obtain a guess $x_j \in \mcX$ for the referee's measurement outcome $x$. The guesses $x_1,x_2,\ldots,x_k$ are submitted to the referee to be evaluated, where the players win if and only if $x=x_1=x_2=\ldots=x_k$. The role of the distribution $q_I$ is to test correlations between different subsets of players and the referee, each with varying frequency. The larger the subset $A \in \mcI$ is, the more restrictive the condition on winning is, since all players need to guess the referee's outcome correctly. As will be evident later on, an asymmetrical distribution $q_I$ between the players will generally favor, for optimal guessing, the use of states that prioritize entanglement between the referee and players who are selected with high frequency.

Both a referee configuartion $\mcR$ and an evaluation procedure $\mcE$ for $m$ players are suffcient to describe a measurement-result quantum guessing game, or MRQG game for short.

\begin{definition}
    A \underline{Measurement-Result Quantum Guessing Game} is a tuple $\mcG=(\mcR,\mcE)$ consisting of a referee configuration $\mcR$ and an evaluation procedure $\mcE$.
\end{definition}


\subsection{Strategies for Measurement-Result Quantum Guessing Games}


As illustrated in Example~\ref{ex:OTG_with_GHZ}, players interact with the referee via one-way communication and a shared state $\rho \in \mcD(\mcH_{R} \otimes \mcH_P)$, where $\mcH_P$ denotes the joint Hilbert space of the players. The goal of the players in MRQG games is to use the parameter $\theta \in \Theta$ to guess the referee's measurement outcome $x \in \mcX$. In particular, the referee's measurement $\mcM^\theta$ on register $R$ of the shared state $\rho \in \mcD(\mcH_R \otimes \mcH_P)$ yields an ensemble of post-measurement states $(\rho_x^\theta)_{x \in \mcX}$ on the players' shared Hilbert space, between which the players aim to discriminate. We formally define a strategy for a MRQG game below: 

\begin{definition}\label{def:MoE_Strategy}
    A strategy for a Measurement-Result Quantum Guessing Game is defined by a tuple\\ $\mcS = (\mcH_{P_1...P_M},\rho_{RP_1...P_m}, \{\mcP^\theta_i\}_{\theta \in \Theta, i \in [m]})$, where:
    \begin{itemize}
        \item $\mcH_{P_1...P_m} = \mcH_{P_1} \otimes \mcH_{P_2} \otimes \ldots \otimes \mcH_{P_m}$ is the joint Hilbert space of the players. The i-th player holds the i-th register, corresponding to the Hilbert space $\mcH_{P_i}$.
        \item $\rho_{RP_1...P_m} \in \mcD(\mcH_R \otimes \mcH_{P_1...P_M})$ is a shared state between the referee and the players.
        \item For each $\theta \in \Theta, I \in \mcI$ and $i \in I$, $\mcP^{\theta,I}_i$ is a POVM measurement for the i-th player on $\mcH_{P_i}$, with POVM elements $\mcP^{\theta,I}_{i,x}$ and outcomes in $x\in \mcX$.
    \end{itemize}
\end{definition}


The probability of winning the MRQG game $\mcG$ using a strategy $\mcS$ will be denoted by $p_{\mcG}(\mcS)$, and takes the following form.
\begin{equation*}\label{eq:MoE_win_prob}
    p_{\mcG}(\mcS) = \sum_{\theta \in \Theta} \sum_{I \in \mcI} p_{\Theta}(\theta) q_I(I) \sum_{x \in \mcX} \Tr \left[ \left( \mcM^\theta_x \otimes \mcP^{\theta,I}_{i_1,x} \otimes \mcP^{\theta,I}_{i_2,x} \otimes \ldots \otimes \mcP^{\theta,I}_{i_k,x}\right) \rho_{RP_{i_1}...P_{i_k}} \right]
\end{equation*}

The players aim to select a strategy that maximizes the probability of winning. We define the \textit{value} of a MRQG game $\omega(\mcG)$ to be the supremum of all winning probabilities across all strategies:
\begin{equation*}\label{eq:MoE_value}
    \omega(\mcG) = \sup \{p_\mcG(\mcS):\mcS \text{ is a strategy for $\mcG$}\}
\end{equation*}

Occasionally, it is of interest to determine the optimal winning probability when the dimensions of the players' subsystems are bounded in dimension. Letting $\underline{d}=(d_1,\ldots,d_m) \in \bbN^m$, we define a $\underline{d}$-bounded
strategy for an MRQG game $\mcG$ to be a strategy in which the i-th player is bounded to use a Hilbert space of dimension at most $d_i$. The $\underline{d}$-bounded value $\omega(\mcG,\underline{d})$ of a MRQG game $\mcG$ is defined as the largest achievable winning probability under a $\underline{d}$-bounded strategy.
\begin{equation*}\label{eq:MoE_constrained_value}
    \omega(\mcG,\underline{d}) = \sup \{p_\mcG(\mcS):\mcS \text{ is a strategy for $\mcG$, $\dim(\mcH_{P_i}) \leq d_i$}\}
\end{equation*}

When $d_i=D$ is the same for each $i \in [m]$, we simply say that a $\underline{d}$-bounded strategy is $D$-bounded, and write $\omega(\mcG,D)$ instead of $\omega(\mcG,\underline{d})$. A $D$-bounded strategy with $D=1$ will be referred to as a \textit{classical} strategy, whereas a $\underline{d}$-bounded strategy with $d_i>1$ for all $i \in [m]$ will be referred to as a \textit{quantum} strategy. If $d_i > 1$ for some but not all $i \in [m]$, the strategy will be referred to as a \textit{classical-quantum} strategy. The choice of names for the classes of strategies is appropriate, since a player with a Hilbert space of dimension $d_i=1$ must guess an outcome based on the classical uncertainty of the referee's measurement, without assistance from entanglement.

One may also consider randomized strategies for MRQG games. A randomized strategy $\mcS^{\text{rand}}$ is a probabilistic combination of strategies $\mcS_1,\mcS_2,\ldots,\mcS_k$, where strategy $\mcS_j$ is chosen with probability $r_j$. The winning probability of a randomized strategy is given by $p_{\mcG}(\mcS^{\text{rand}}) = \sum_j r_j p_{\mcG}(\mcS_j)$. Formally, a randomized strategy can be implemented with shared state $\rho=\sum_j r_j \ket{j}\bra{j}^{\otimes m} \otimes \rho_j \in \mcD(C_{J^m} \otimes \mcH_{RP_1...P_m})$ and player $P_i$'s measurements given by the POVM elements $\mcP^{\theta,I}_{i,x} = \sum_j \ket{j}\bra{j} \otimes (\mcP_j)^{\theta,I}_{i,x}$ in $\mcP(C_J \otimes \mcH_{P_i})$. The states $\rho_j$ and the POVM elements $(\mcP_j)^{\theta,I}_{i,x}$ are taken from strategy $\mcS_j$. The classical register $C_J$ contains the randomness seed $j \in J$, and a copy of $C_J$ is provided freely to each player. We abide by the convention that the classical register $C_J$ does \underline{not} increase the dimensions of the individual players' Hilbert spaces in $\underline{d}$-bounded strategies. This choice is made for two reasons: 1) randomized strategies do not outperform deterministic strategies, so $\omega(\mcG)$ and $\omega(\mcG,\underline{d})$ are unaffected if we include randomized strategies into their definitions, and 2) randomized strategies can be simulated before the game begins, by having the players agree to perform strategy $\mcS_j$ with probability $r_j$. 

Finally, a MRQG game is said to be \textit{symmetric} if its evaluation procedure $\mcE$ is invariant under permutations of the players' indices. That is, for any $I = \{i_1,i_2,\ldots,i_k\} \in \mcI$ and permutation $\sigma \in S_m$, we have $\sigma(I)=\{\sigma(i_1),\sigma(i_2),\ldots,\sigma(i_k)\} \in \mcI$, and furthermore $q_I(I) = q_I(\sigma(I))$. In the context of $D$-bounded strategies, we consider a strategy $\mcS$ to be \textit{symmetric} if $\mcP^{\theta,I}_{i,x} = \mcP^{\theta,\sigma(I)}_{\sigma(i),x}$ for all $i \in [m]$ and permutations $\sigma \in S_m$, and the shared state $\rho_{RP_1...P_m}$ is invariant under any permutation of the players' subsystems (i.e. $\rho_{RP_{\sigma(1)}...P_{\sigma(m)}} = \rho_{RP_1...P_m}$, where $\sigma \in S_m$). 

\subsection{Monogamy-of-Entanglement Games and One-at-a-Time Guessing Games}\label{sec:OTG}


Here, we discuss a few specific points about both MoE and OTG games. As a quick note, we drop the index $I$ from the players' POVM elemens $\mcP^{\theta,I}_{i,x}$, as it becomes redundant in both settings.

A MoE game is always specified by a fixed number of players, due to the definition of the evaluation procedure. It is naturally of interest, however, to determine how the value of the game changes as more players are added. To this end, for a fixed referee configuration $\mcR$, we consider the family of MoE games $\mcG_m=(\mcR,\mcE_m)$, where $\mcE_m=(\mcI_m,q_{I,m})$ is the $m$-player evaluation procedure for a monogamy-of-entanglement game, i.e. $\mcI_m=[m]$ and $q_{I,m}([m])=1$. The winning probability of a strategy $\mcS$ for the $m$-player MoE game $\mcG_m$ is
\begin{equation*}
    p_{\mcG_m}(\mcS) = \sum_{\theta \in \Theta}p_{\Theta}(\theta)\sum_{x \in \mcX} \Tr\left[\left(\mcM^\theta_x \otimes \mcP^\theta_{1,x} \otimes \mcP^\theta_{2,x} \otimes \ldots \otimes \mcP^\theta_{m,x} \right)\rho_{RP_1...P_m}\right]
\end{equation*}
We then consider the $m$-player value $\omega_m(\mcR)$ of a referee configuration $\mcR$ to be the associated value of the $m$-player MoE game with referee configuration $\mcR$.
\begin{equation*}
    \omega_m(\mcR) = \omega(\mcG_m) = \sup\{p_{\mcG_m}(\mcS) : \mcS \text{ is a strategy for } \mcG_m\}
\end{equation*}
We consider one more distinguished value for a MoE game setup with referee configuration $\mcR$, which we call the \textit{classical} value $\omega_c(\mcR)$ of a referee configuration $\mcR$, which measures how well a player can guess the referee's measurement outcome without entanglement (i.e. using classical strategies).
\begin{equation*}
    \omega_c(\mcR) = \omega(\mcG_1,D=1) = \sup\{p_{\mcG_1}(\mcS) : \mcS \text{ is a $1$-bounded strategy for } \mcG_1\}
\end{equation*}
A referee configuration $\mcR$ is said to admit a $m$-player quantum advantage if $\omega_m(\mcR)>\omega_c(\mcR)$. Clearly, $\omega_m(\mcR)\geq \omega_c(\mcR)$, as any strategy without entanglement can be achieved with $m$ players (i.e. all $m$ players agree on their answers before the game begins), so a quantum advantage is only present when the players can use entanglement to their advantage, and outperform the classical value.

\begin{remark}
    In \cite[Example 4.2]{MOE:Johnston16}, a referee configuration consisting of four mutually unbiased basis measurements on a qutrit was found to demonstrate a slight 2-player quantum advantage over the classical value (there, the classical value is instead called the unentangled value). It is an open question as to whether a 2-player quantum advantage can be achieved with a question set of size three \cite{MOE:Russo17}. Further examples of 2-player quantum advantages in MoE games can be found in \cite{MOE:Dubois23, MOE:Moran25}. 
\end{remark}

Monogamy-of-Entanglement games provide a useful framework for understanding the limited ability of quantum correlations to be distributed in multi-partite systems. However, the framework is quite restrictive, in the sense that MoE games require \textit{simultaneous} consistent guesses among all the players. Rather, we may be interested in testing only the correlation between the referee's and \textit{individual} players' outcomes, motivating the study of OTG games. As suggested by its name, a One-at-a-Time Guessing Game (OTG game, for short), is a MRQG game in which only one player guesses the referee's measurement outcome. The player $P_i$ who makes the guesses is randomly selected by the referee with probability $q_i$. As an MRQG game, the evaluation procedure selects only one player at a time, so without loss of generality, we can write the selection function $q_I$ as a player selection function, i.e. $q_I(\{i\})=q_i$.

\begin{definition}\label{def:OTG_game}
    A One-at-a-Time Guessing Game is a Measurement-Result Quantum Guessing game  $\mcG$, equipped with a player selection probability mass function $(q_i)_{i=1}^m$ on $m$ players such that the player query set is $\mcI=\{\{1\},\{2\},\ldots,\{m\}\}$, and $q_I(\{i\})=q_i$,  i.e. the i-th player $P_i$ is selected with probability $q_i$ to guess the referee's measurement outcome. We write the evaluation procedure as $\mcE_q$ for OTG games.
\end{definition}

Following Definition~\ref{def:MoE_Strategy}, the winning probability of a strategy $\mcS$ for an OTG game $\mcG$ with player selection probability $q$ is given by
\begin{align*}
    p_{\mcG}(\mcS) &= \sum_{\theta \in \Theta, i \in [m]} p_\Theta(\theta)q_i \sum_{x \in \mcX} \Tr\left[\left(\mcM^\theta_x \otimes \bbI_{P_1} \otimes \ldots \otimes \mcP^\theta_{i,x} \otimes \ldots \otimes \bbI_{P_m}\right) \rho_{RP_1...P_m}\right]\\
    &= \sum_{\theta \in \Theta, i \in [m]} p_\Theta(\theta)q_i \sum_{x \in \mcX} \Tr\left[\left(\mcM^\theta_x \otimes \mcP^\theta_{i,x}\right) \rho_{RP_i}\right].
\end{align*}

We are again interested in characterizing the \textit{value} of an OTG game, which is the largest achievable winning probability of the game, $\omega(\mcG)$. In a similar fashion to MoE games, we would like to understand how, for a given referee configuration $\mcR$, the value of the OTG game changes with the player selection probability $q$. This motivates us to consider the associated value of an OTG game $\mcG = (\mcR,\mcE_q)$ with referee configuration $\mcR$ and player selection probability $q$:
\begin{equation*}
    \omega(\mcR,q) = \sup\{p_{\mcG}(\mcS) : \mcS \text{ is a strategy for } \mcG = (\mcR,\mcE_q)\}
\end{equation*}

There is a natural way to transform an $m$-player OTG game into an $m'$-player OTG game for $m<m'$: given an $m$-player OTG game with player selection probability $q$ on $[m]$, define a probability distribution $q'$ on $[m']$ by setting $q_i=q'_i$ for $i \in [m]$ and $q_i'=0$ otherwise. Operationally, the players $P_i$ for $i>m$ are never selected as guessers, so the $m'$-player game functions exactly like the corresponding $m$-player game. Consequently, we say that any $m$-player OTG game can be embedded into an $m'$-player OTG game for $m<m'$. This allows us to compare the values and strategies of OTG games with the same referee configuration but varying numbers of players.

We would also like to determine how well players can guess outcomes in OTG games with limited resources, just as we did for MoE games, so that we can identify when a non-trivial advantage is present. The first designated value we consider is the classical value $\omega_c(\mcR,q)$ of an OTG game $\mcG=(\mcR,\mcE_q)$, which is once again the players' largest achievable average winning probability without the use of entanglement.
\begin{equation*}
    \omega_c(\mcR,q) = \omega(\mcG,D=1) = \sup \{p_{\mcG}(\mcS) : \mcS \text{ is a 1-bounded strategy for } \mcG\}
\end{equation*}
It should not be too surprising that, for a given referee configuration $\mcR$, the classical values for the corresponding MoE game and any OTG game with player selection probability $q$ are identical, i.e. $\omega_c(\mcR)=\omega_c(\mcR,q)$. This follows easily, because the players can always agree upon guesses before the game begins. The second value of interest for OTG games is the maximum achievable average winning probability using states that decouple upon tracing out one player. Specifically, we designate the bipartite value $\omega_{bpe}(\mcR,q)$ to be the supremum winning probability over all strategies in which the state $\rho_{RP_1...P_m}$ satisfies $\Tr_{P_i}[\rho_{RP_1...P_m}] = \rho_R \otimes \Tr_{P_i}[\rho_{P_1...P_m}]$ for some player index $i \in [m]$.
\begin{align*}
    \omega_{bpe}(\mcR,q) = \sup\{p_{\mcG}(\mcS): \mcS &\text{ is a strategy for } \mcG \text{ such that }  \Tr_{P_i}[\rho_{RP_1...P_m}] = \rho_R \otimes \Tr_{P_i}[\rho_{P_1...P_m}]\\
    &\text{ for some player index } i \in [m]\}
\end{align*}

In particular, this includes strategies where one player is maximally entangled with the referee, leaving the remaining players unentangled with the referee. In fact, for any player $j\neq i$, the reduced density matrix for the referee and player $j$ is $\rho_R \otimes \rho_{P_{j}}$, which implies the measurement outcomes of the referee and player $j$ are independent. Consequently, the measurement statistics of player $j$ can be simulated with classical randomness, and the strategies $\mcS$ in the definition of $\omega_{bpe}(\mcR,q)$ may be restricted to those classical-quantum strategies with $d_j=1$ for $j \neq i$.  We will say that an OTG game $\mcG=(\mcR,\mcE_q)$ admits a quantum advantage if $\omega(\mcG)>\omega_c(\mcR,q)$, and furthermore an OTG game admits a super bipartite advantage if $\omega(\mcG)>\omega_{bpe}(\mcR,q)$.

Finally, we consider performance metric for a strategy $\mcS$, which is independent of the evaluation procedure $\mcE$ of the MRQG game $\mcG$, and by which we refer to as the \textit{minimum} winning probability $p_{\mcG,\min}(\mcS)$. It is defined to be the smallest winning probability for any individual player.
\begin{equation*}
    p_{\mcR,\min}(\mcS) = \min \left\{\sum_{\theta \in \Theta} p_\Theta(\theta) \sum_{x \in \mcX} \Tr\left[\left(\mcM^\theta_x \otimes \bbI_{P_1} \otimes \ldots \otimes \mcP^\theta_{i,x} \otimes \ldots \otimes \bbI_{P_m}\right) \rho_{RP_1...P_m}\right] : i \in [m]\right\}
\end{equation*}

The $m$-player \textit{maximum-minimum} value of a MRQG game (max-min value, for short) is defined as the largest achievable minimum winning probability over all $m$-player strategies $\mcS$.
\begin{equation*}
    \omega_{m,\max\min}(\mcR) = \sup \{p_{\mcR,\min}(\mcS) : \mcS \text{ is a strategy for } \mcG\}
\end{equation*}

\subsection{Relationship between Game Values}

With all definitions in order, we are ready to state the main results about general MoE games and OTG games. In particular, we demonstrate how optimal performance in these games changes with the number of players, the player selection probability, etc. We also compare the performance of optimal strategies to classical strategies and strategies using bipartite entanglement.

\begin{restatable}{theorem}{MoEValues}\label{thm:moe_values}
    Let $\mcG_m$ be a Monogamy-of-Entanglement game with referee configuration $\mcR = (\mcH_R,\Theta,p_\Theta,\mcX,\{\mcM^\theta\}_{\theta \in \Theta})$ and $m$ players. Then, the following holds:

    \begin{enumerate}
        \item If $\ell<m$, then $\omega_{m}(\mcR)\leq \omega_\ell(\mcR)$
        \item If $\underline{d},\underline{d}' \in \bbN^m$ are such that $\underline{d} \preceq \underline{d}'$, then $\omega_m(\mcR,\underline{d}) \leq \omega_m(\mcR,\underline{d}') \leq \omega_m(\mcR)$
        \item $\omega_m(\mcR)$ and $\omega_m(\mcR,D)$ can be approximated with randomized, symmetric strategies.
        \item $\frac{1}{|\mcX|} \leq \omega_m(\mcR,\underline{d})$ for all $\underline{d} \in \bbN^m$, and if $m\geq |\Theta|$, then $\omega_m(\mcR) =\omega_c(\mcR)$
        \item Classical-quantum strategies offer no advantage over classical strategies, i.e. $\omega_m(\mcR,\underline{d})=\omega_c(\mcR)$ if $d_i=1$ for some $i \in [m]$.
    \end{enumerate}
\end{restatable}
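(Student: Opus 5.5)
The plan is to handle the five items separately. Items 1, 2 and 5 follow from the operator inequality $0 \le \mcP^\theta_{i,x} \le \bbI$ together with embedding arguments. Item 3 is an averaging argument. The main work is the second half of item 4.

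\textbf{Items 1 and 2.} For item 1, I take an $m$-player strategy and keep only players $1,\dots,\ell$, with the reduced state $\rho_{RP_1\dots P_\ell}$ and the same POVMs. Since $\mcM^\theta_x\otimes\bigotimes_{i\le m}\mcP^\theta_{i,x} \le \mcM^\theta_x\otimes\bigotimes_{i\le \ell}\mcP^\theta_{i,x}\otimes \bbI$, the $\ell$-player strategy wins with at least the same probability. Taking suprema gives $\omega_m(\mcR)\le\omega_\ell(\mcR)$.

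For item 2, I embed each $\mcH_{P_i}$ of dimension at most $d_i$ isometrically into a space of dimension at most $d_i'$. The state is pushed forward along the embedding. Each POVM is extended by adding the projector onto the orthogonal complement to the element of one fixed outcome. The winning probability is unchanged, so $\omega_m(\mcR,\underline{d})\le\omega_m(\mcR,\underline{d}')$. The bound $\omega_m(\mcR,\underline{d}')\le\omega_m(\mcR)$ holds because the supremum is taken over a larger set.

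\textbf{Item 3.} Given any strategy $\mcS$ and $\sigma\in S_m$, let $\mcS_\sigma$ permute the player registers of $\rho$ and give player $\sigma(i)$ the POVM of player $i$. The MoE winning condition (all guesses equal $x$) is permutation invariant, so $p_{\mcG_m}(\mcS_\sigma)=p_{\mcG_m}(\mcS)$. Next I form the randomized strategy choosing $\sigma$ uniformly, with a classical register holding $\sigma$ copied to every player. Each player conditions on $\sigma$ to pick the appropriate measurement. The resulting state and measurement assignment are invariant under relabelling the players, the value is unchanged, and the dimensions are unchanged under the paper's convention on $C_J$. Applying this to a sequence of near-optimal strategies, whether bounded or unbounded, gives the approximation claim.

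\textbf{Item 4: lower bound.} All players agree in advance to output a fixed outcome $x^\ast$. This wins with probability $\sum_\theta p_\Theta(\theta)\Tr[\mcM^\theta_{x^\ast}\rho_R]$. These quantities sum to $1$ over $x^\ast$, so some choice of $x^\ast$ gives at least $1/|\mcX|$.

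\textbf{Item 4: main step, $\omega_m(\mcR)=\omega_c(\mcR)$ when $m\ge|\Theta|$.} First I identify the classical value. A $1$-bounded player only holds the referee's state $\rho_R$ and guesses some function $g(\theta)$, possibly randomized, which does not help. Hence
\[
\omega_c(\mcR)=\max_{g:\Theta\to\mcX}\Bigl\|\textstyle\sum_\theta p_\Theta(\theta)\mcM^\theta_{g(\theta)}\Bigr\|.
\]
For the upper bound on $\omega_m(\mcR)$, I assign an injective map $\theta\mapsto i_\theta$ into $[m]$. For each $\theta$ I use
\[
\bigotimes_i\mcP^\theta_{i,x}\le \mcP^\theta_{i_\theta,x}\otimes\bbI .
\]
Then I define a single $\theta$-independent POVM on the players' joint system,
\[
Q_{y}=\bigotimes_\theta \mcP^{\theta}_{i_\theta,y_\theta}\otimes\bbI, \qquad y\in\mcX^\Theta ,
\]
which makes sense because the $i_\theta$ are distinct. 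Marginalizing $Q_y$ over $y_{\theta'}$ for $\theta'\neq\theta$ recovers $\mcP^\theta_{i_\theta,x}$. This gives
\[
p_{\mcG_m}(\mcS)\le\sum_y\Tr\Bigl[\Bigl(\textstyle\sum_\theta p_\Theta(\theta)\mcM^\theta_{y_\theta}\Bigr)\otimes Q_y\,\rho\Bigr]\le\max_y\Bigl\|\textstyle\sum_\theta p_\Theta(\theta)\mcM^\theta_{y_\theta}\Bigr\|=\omega_c(\mcR).
\]
The reverse inequality holds because all players can play the optimal classical strategy. I expect this construction of a joint, question-independent measurement to be the crux of the whole theorem.

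\textbf{Item 5.} If $d_i=1$, player $i$'s POVM elements are scalars $c^\theta_x\ge0$ with $\sum_xc^\theta_x=1$. Bounding the other players' elements by $\bbI$ gives
\[
p\le\sum_\theta p_\Theta(\theta)\sum_x c^\theta_x\Tr[\mcM^\theta_x\rho_R].
\]
This expression is affine in the $c^\theta$, so its maximum is attained at a deterministic $g$, and it is therefore at most $\omega_c(\mcR)$. Equality follows because classical strategies are $\underline{d}$-bounded for every $\underline{d}$, with all players agreeing on $g$.
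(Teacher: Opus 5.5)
\textbf{Items 1, 2, 4 and 5.} These are correct and track the paper's proof. Your question-independent POVM $Q_y$ in item 4 is exactly the paper's joint measurement. The paper's unnormalized $p_y\sigma_{R,y}$ is precisely the post-measurement state of $R$ after the players measure $\{Q_y\}$. You bound the resulting ensemble directly by $\max_y\|\sum_\theta p_\Theta(\theta)\mcM^\theta_{y_\theta}\|=\omega_c(\mcR)$. The paper instead repackages the same ensemble as an explicit separable strategy that attains the bound. Your fixed-guess lower bound is fine. So is the reverse inequality you supply in item 5, which the paper leaves implicit.

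\textbf{Item 3 does not work as written.} You copy the whole permutation $\sigma$ to every player. Then player $k$ must apply the POVM of original player $\sigma^{-1}(k)$, namely $\sum_\sigma\ket{\sigma}\bra{\sigma}\otimes\mcP^\theta_{\sigma^{-1}(k),x}$, and this depends on $k$. The state also fails to be invariant: permuting the players' registers leaves the classical part $\ket{\sigma}\bra{\sigma}^{\otimes m}$ unchanged but permutes the quantum part. So, contrary to your claim, the strategy is not symmetric in the paper's sense, which requires one common POVM for all players and a permutation-invariant state.

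The repair is the paper's construction. Give player $i$ only its own role label: put $\ket{\sigma(i)}\bra{\sigma(i)}$ in $C_{P_i}$ next to the original system $P_{\sigma(i)}$. Embed all $\mcH_{P_j}$ into a common space of dimension $\max_j\dim\mcH_{P_j}\le D$. Then let every player use the same POVM $\sum_j\ket{j}\bra{j}\otimes\mcP^\theta_{j,x}$. Averaging over all of $S_m$ makes the state invariant, and your permutation-invariance argument shows the value is unchanged.
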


The proof of Theorem~\ref{thm:moe_values} is given in Appendix~\ref{sec:Appendix_Proofs}. The takeaway of Theorem~\ref{thm:moe_values} is that increasing the number of players limits the potential of a quantum advantage for the MoE game up until a point, beyond which there is no quantum advantage. The result that $\omega_m(\mcR)=\omega_c(\mcR)$ for $m\geq |\Theta|$ is a direct generalization of \cite[Theorem 4.1]{MOE:Johnston16}, where it is proven for $|\Theta|=2$.

The next theorem states some general results for OTG games, where special attention has been given to understanding the effect of the player selection probability on the value of the OTG game.

\begin{restatable}{theorem}{gameValues}\label{thm:game_values}

    Let $\mcG$ be a One-at-a-Time Guessing Games with referee configuration $\mcR = (\mcH_R,\Theta,p_\Theta,\mcX,\{\mcM^\theta\}_{\theta \in \Theta})$, and let $p,q$ be player selection probabilities on $m$ players. Then, the following holds:

    \begin{enumerate}
        \item If $p \preceq q$, then $\omega(\mcR,p) \leq \omega(\mcR,q)$ and $\omega(\mcG,p,D) \leq \omega(\mcG,q,D)$
        \item If $\underline{d},\underline{d}' \in \bbN^m$ are such that $\underline{d} \preceq \underline{d}'$, then $\omega(\mcR,q,\underline{d}) \leq \omega(\mcR,q,\underline{d}') \leq \omega(\mcR,q)$
        \item If $q_{\text{Unif}}$ is the uniform distribution on $[m]$, $\omega(\mcR,q_{\text{Unif}})$ and $\omega(\mcR,q_{\text{Unif}},D)$ can be approximated with randomized, symmetric strategies.
        \item There are OTG games which possess both a quantum advantage and a super bipartite advantage.
    \end{enumerate}

\end{restatable}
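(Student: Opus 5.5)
The plan is to write the winning probability of a fixed strategy $\mcS$ as $p_\mcG(\mcS)=\sum_i q_i w_i(\mcS)$. Here $w_i(\mcS)=\sum_\theta p_\Theta(\theta)\sum_x \Tr[(\mcM^\theta_x\otimes\mcP^\theta_{i,x})\rho_{RP_i}]$ is player $i$'s individual success probability. Then $\omega(\mcR,q)=\sup_{w\in W}\langle q,w\rangle$, where $W\subseteq[0,1]^m$ is the set of achievable individual-success vectors. The structural properties of $W$ drive everything.

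\textbf{Step 1: permutation invariance and convexity of $W$.} Relabelling the players' registers and measurements shows that $W$ is invariant under any $\sigma\in S_m$. Randomized strategies show that $W$ is convex: the mixture $\rho=\sum_j r_j\ket{j}\bra{j}^{\otimes m}\otimes\rho_j$ realises $\sum_j r_j w(\mcS_j)$. In the $D$-bounded case we use the paper's convention that the classical seed register does not count toward $D$.

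\textbf{Step 2: item 1.} If $p\preceq q$, a standard majorization fact (Hardy--Littlewood--P\'olya) gives $p=\sum_\sigma \lambda_\sigma\, \sigma(q)$ with $\lambda_\sigma\ge 0$ and $\sum_\sigma\lambda_\sigma=1$. Hence
\[\langle p,w\rangle=\sum_\sigma\lambda_\sigma\langle q,\sigma^{-1}(w)\rangle\le\sup_{w'\in W}\langle q,w'\rangle,\]
using permutation invariance of $W$. Taking the supremum over $w\in W$ gives $\omega(\mcR,p)\le\omega(\mcR,q)$. The same argument works verbatim for $D$-bounded strategies, because permuting players preserves the uniform bound $D$. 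Note that convexity is not even needed here.

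\textbf{Step 3: item 2.} A $\underline{d}$-bounded strategy is already $\underline{d}'$-bounded, since the condition is $\dim\mcH_{P_i}\le d_i\le d_i'$. It is also a strategy with no dimension restriction. Monotonicity of the suprema follows.

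\textbf{Step 4: item 3.} Let $\mcS$ be any strategy. Form the randomized strategy that chooses $\sigma\in S_m$ uniformly, stores $\sigma$ in the classical seed, permutes the players' registers accordingly, and has player $i$ use $\mcP^\theta_{\sigma^{-1}(i)}$. The resulting state $\frac{1}{m!}\sum_\sigma\ket{\sigma}\bra{\sigma}^{\otimes m}\otimes\sigma\cdot\rho$ is invariant under player permutations once the seed is relabelled accordingly. The measurements are likewise symmetric. Each player's success becomes $\frac1m\sum_i w_i(\mcS)$, so under $q_{\text{Unif}}$ the value is unchanged. Applying this to a sequence of strategies approaching the supremum proves the claim, in both the unbounded and the $D$-bounded case.

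\textbf{Step 5: item 4, the main obstacle.} This needs an explicit example. Take $\mcH_R=\bbC^2$ with the referee measuring $X$ or $Z$ uniformly, two players, and $q=(\tfrac12,\tfrac12)$.

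For the benchmarks: $\omega_c=\tfrac12$, since the outcomes are uniform. For $\omega_{bpe}$, if player $i$ is decoupled from $R$ we get at most $q_{i'}\cdot1+q_i\cdot\tfrac12=\tfrac34$.

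For the quantum strategy, I would use a $W$-like state or the state $\cos\alpha\ket{\Phi^+}_{RP_1}\ket0+\sin\alpha\ket{\Phi^+}_{RP_2}\ket0$ (with appropriate symmetrisation). An alternative is a symmetric cloning-type state: the optimal symmetric $1\to2$ phase-covariant cloner gives each $\rho_{RP_i}$ an $X$- and $Z$-correlation of $1/\sqrt2$. Each player then succeeds with probability $\tfrac12+\tfrac{1}{2\sqrt2}\approx0.854>\tfrac34$. This gives both a quantum advantage and a super bipartite advantage.

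The delicate point is verifying that the chosen tripartite state really has those reduced two-qubit correlations. I would check this by computing $\Tr[(X\otimes X)\rho_{RP_i}]$ and $\Tr[(Z\otimes Z)\rho_{RP_i}]$ directly. The relevant fact is that an economical phase-covariant cloner of a maximally entangled half achieves $\tfrac12+\tfrac1{\sqrt8}$ fidelity in the equatorial plane, and the equatorial plane contains both $X$ and $Z$.
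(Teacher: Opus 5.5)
Items 1–3 are sound. For item 1 you take a different route from the paper. The paper writes $p=\Pi q$, decomposes $\Pi$ by Birkhoff–von Neumann, and builds a single randomized strategy for $\mcG_q$ whose value equals $p_{\mcG_p}(\mcS_p)$ exactly. Your averaging over permuted copies uses only permutation invariance of the achievable success vectors. It is shorter and shows slightly more: some deterministically relabelled copy of $\mcS_p$ already does at least as well, so the $D$-bounded claim does not rely on the seed-register convention.

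In item 3, give player $i$ only the index of the original player it simulates (the paper's $\ket{\sigma(i)}$), not a full copy of $\sigma$. Then the state is literally permutation invariant, and every player uses the same POVM $\sum_j\ket{j}\bra{j}\otimes\mcP^\theta_{j,x}$ with no relabelling.

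\textbf{Item 4 has a genuine gap.} Your benchmarks assume the referee's outcome is uniform, but the players choose the whole state, including $\rho_R$.
\begin{itemize}
\item In the $X/Z$ game, a classical strategy can take $\rho_R$ pure with Bloch vector bisecting $\hat x$ and $\hat z$ and always answer $0$. It wins with probability $\frac12(1+|\langle 0|+\rangle|)=\frac12+\frac{1}{2\sqrt2}=\cos^2(\pi/8)\approx0.854$. So $\omega_c=\cos^2(\pi/8)$ (the one-qubit BB84 value), not $\frac12$.
\item A decoupled player wins with probability $\frac12\Tr[(\mcM^X_{f(X)}+\mcM^Z_{f(Z)})\rho_R]$, not $\frac12$.
\item Product strategies are admissible for $\omega_{bpe}$, so $\omega_{bpe}\ge\omega_c>\frac34$.
\end{itemize}
Your cloning strategy wins with exactly $\frac12+\frac{1}{2\sqrt2}$. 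It only ties the classical value and demonstrates neither advantage.

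No other state rescues this example, because the two-player $X/Z$ game with uniform selection has no quantum advantage at all. After a Naimark dilation, define:
\begin{itemize}
\item $B^\theta,C^\theta$: the players' $\pm1$ observables;
\item $P=B^X+C^X$, $Q=B^Z+C^Z$, $U=B^X-C^X$, $V=B^Z-C^Z$;
\item $H=X\otimes P+Z\otimes Q$, with $X,Y,Z$ acting on the referee's qubit.
\end{itemize}
Then $p_\mcG(\mcS)=\frac12+\frac18\bra{\psi}H\ket{\psi}$. Using $XZ=-iY$, $(B^\theta)^2=(C^\theta)^2=\bbI$ and $[P,Q]=[U,V]$, one gets
\[
8\,\bbI-H^2=U^2+V^2+i\,Y\otimes[U,V]=\bigl(U+i\,Y\otimes V\bigr)^\dagger\bigl(U+i\,Y\otimes V\bigr)\succeq 0 .
\]
Hence $\|H\|\le2\sqrt2$, and the value is exactly $\frac12+\frac{\sqrt2}{4}=\omega_c$. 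With only two complementary bases, biasing $\rho_R$ is as good as any cloning.

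What you need is a referee configuration incompatible enough for entanglement to beat the classical bias. This is why the paper uses the three-Pauli game: there $\omega_c=\frac12+\frac{\sqrt3}{6}$, $\omega_{bpe}=\frac12+\frac{\sqrt6}{8}$, and $\frac56$ is attained by $\ket{W'_1}$ with the copy measurements. Even in that game, upper-bounding $\omega_{bpe}$ is not the naive ``decoupled player gets $\frac12$'' bound. It requires optimizing the trade-off between entangling one player and biasing $\rho_R$ for the other. The paper does this with deterministic strategies for the decoupled player plus an exact SDP over the other player's assemblage, and the optimum uses a non-maximally entangled state.
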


    A quick note is warranted for the last statement. In some OTG games, the existence of a quantum advantage may be trivial. For example, if the referee performs basis measurements using the BB84 states on $n$ qubits, the classical value of the corresponding OTG game with two players selected equally decays exponentially in $n$ (see \cite[Theorem 3]{MoE:Tomamichel13}), whereas maximum entanglement between the referee and a single players always allows for a winning probability of at least $\frac{1}{2}$. Thus, the more non-trivial result for OTG games comes from the potential of a super bipartite advantage.

    Finally, for a given referee configuration $\mcR$, one can relate the values of the various types of MRQG games, with MoE games and OTG games representing extreme ends of the spectrum:

\begin{restatable}{theorem}{compareValues}\label{thm:compare_vals}

    Let $\mcR = (\mcH_R,\Theta,p_\Theta,\mcX,\{\mcM^\theta\}_{\theta \in \Theta})$ be a referee configuration, and $q$ a player selection probability on $[m]$. Then,
    \begin{equation*}
        \frac{1}{|\mcX|} \leq \omega_m(\mcR) \leq \omega_{m,\max\min}(\mcR) = \omega(\mcR,q_{\text{Unif}}) \leq \omega(\mcR,q) \leq 1
    \end{equation*}

\end{restatable}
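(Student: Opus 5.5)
We prove the chain one inequality at a time, leaning on Theorems~\ref{thm:moe_values} and~\ref{thm:game_values}. The outer bounds are immediate. The lower bound $\frac{1}{|\mcX|}\le\omega_m(\mcR)$ is item 4 of Theorem~\ref{thm:moe_values}. The upper bound $\omega(\mcR,q)\le 1$ holds because every winning probability is a convex combination of the quantities $\sum_x \Tr[(\mcM^\theta_x\otimes\mcP^\theta_{i,x})\rho_{RP_i}]$, each of which is at most $\sum_x\Tr[(\mcM^\theta_x\otimes\bbI)\rho]=1$. The inequality $\omega(\mcR,q_{\text{Unif}})\le\omega(\mcR,q)$ is item 1 of Theorem~\ref{thm:game_values}, since the uniform distribution is the minimum of the majorization order.

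For $\omega_m(\mcR)\le\omega_{m,\max\min}(\mcR)$, fix any MoE strategy $\mcS$ and reuse the same state and the same POVMs $\mcP^\theta_{i}$. For each player $i$ we use the operator inequality
\[
\mcM^\theta_x\otimes\mcP^\theta_{1,x}\otimes\cdots\otimes\mcP^\theta_{m,x}\;\le\;\mcM^\theta_x\otimes\bbI\otimes\cdots\otimes\mcP^\theta_{i,x}\otimes\cdots\otimes\bbI ,
\]
which holds because it is a tensor product of positive operators each bounded by the identity. Summing over $\theta$ and $x$ shows that player $i$'s individual success probability is at least $p_{\mcG_m}(\mcS)$. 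Taking the minimum over $i$ gives $p_{\mcR,\min}(\mcS)\ge p_{\mcG_m}(\mcS)$, and then we take suprema.

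The substantive step is the equality $\omega_{m,\max\min}(\mcR)=\omega(\mcR,q_{\text{Unif}})$. Write $w_i(\mcS)$ for player $i$'s individual success probability, so that $p_{\mcG}(\mcS)=\frac1m\sum_i w_i(\mcS)\ge\min_i w_i(\mcS)$ under uniform selection; this gives ``$\le$''. For ``$\ge$'', we invoke item 3 of Theorem~\ref{thm:game_values}: $\omega(\mcR,q_{\text{Unif}})$ is approached by randomized symmetric strategies.

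The main obstacle is checking that for such strategies all $w_i$ coincide. If we cannot cite this directly, we construct it. Starting from an arbitrary strategy $\mcS$, we form the randomized strategy that picks $\sigma\in S_m$ uniformly, stores $\sigma$ in the shared classical register $C_J$ (which, by the paper's convention, does not count toward dimension), permutes the players' registers by $\sigma$, and has player $i$ use the POVM originally assigned to player $\sigma^{-1}(i)$. Each player's success probability is then $\frac1{m}\sum_j w_j(\mcS)=p_{\mcG}(\mcS)$. Hence its minimum equals its uniform-OTG value, which gives $\omega_{m,\max\min}\ge\omega(\mcR,q_{\text{Unif}})$. One must confirm that max-min strategies may be randomized; this follows from the shared-register formalism described in the paper. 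Chaining all the inequalities completes the proof.
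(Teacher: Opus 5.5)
Your proposal is correct and follows essentially the same route as the paper: the outer bounds and the step $\omega(\mcR,q_{\text{Unif}})\le\omega(\mcR,q)$ are cited from Theorems~\ref{thm:moe_values} and~\ref{thm:game_values}, the bound $\omega_m(\mcR)\le\omega_{m,\max\min}(\mcR)$ comes from dropping the other players' POVM elements, and the middle equality combines ``min $\le$ average'' with symmetrization (item 3 of Theorem~\ref{thm:game_values}). Your explicit permutation-averaging construction is the argument behind that item. One small addition: as the paper does, first embed the players' Hilbert spaces into a common dimension, so that permuting their registers is well defined.
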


    Here, once again $q_{Unif}$ is the uniform distribution on $m$ players, and $q$ is an arbitrary player selection probability distribution. We leave the proofs of Theorem~\ref{thm:game_values} and Theorem~\ref{thm:compare_vals} to Appendix~\ref{sec:Appendix_Proofs}.


\section{Case Study: Pauli OTG Game}\label{sec:PauliOTG}

In this section, we consider the case study of a three Pauli-basis One-at-a-Time Guessing game with two players, which will serve as the foundation of our study of OTG games.

In the three Pauli-basis OTG game with two players, we rename the referee $(R)$ to Alice $(A)$, and the players $(P_1,P_2)$ to Bob $(B)$ and Charlie $(C)$. Alice's system consists of a single qubit $(\mcH_A = \bbC^2)$, and her measurements are selected with equal probability from one of the three measurements in the Pauli bases:
\begin{equation*}
    \begin{array}{ccc}
    \mcA^X = \bigl\{\ket{+}\bra{+},\ket{-}\bra{-}\bigr\}, ~~~ & \mcA^Y = \bigl\{\ket{+i}\bra{+i},\ket{-i}\bra{-i}\bigr\}, ~~~ & \mcA^Z = \bigl\{\ket{0}\bra{0},\ket{1}\bra{1}\bigr\}\\ 
    \mcA^X_0 = \ket{+}\bra{+} & \mcA^Y_0 = \ket{+i}\bra{+i} & \mcA^Z_0 = \ket{0}\bra{0}\\
    \mcA^X_1 = \ket{-}\bra{-} & \mcA^Y_1 = \ket{-i}\bra{-i} & \mcA^Z_1 = \ket{1}\bra{1}\\
\end{array}
\end{equation*}

Recall as in Definition~\ref{def:MoE_Strategy}, a strategy for the three Pauli-basis OTG game consists of a choice of Hilbert spaces $\mcH_B,\mcH_C$, a shared state $\rho_{ABC} \in \mcD(\mcH_A \otimes \mcH_{BC})$ and for each $\theta \in \Theta =  \{X,Y,Z\}$, and POVMs $\{\mcB^\theta_x\}_{x \in \mcX}$ and $\{\mcC^\theta_x\}_{x \in \mcX}$ for Bob and Charlie on their respective Hilbert spaces, with binary outcomes $x \in \mcX = \{0,1\}$.

When the player selection probability is $q=(q_B,q_C)$, the winning probability of the three Pauli-basis OTG game is
\begin{align*}
    p_{\mcG}(\mcS) &= \frac{1}{3}\sum_{\theta \in \{X,Y,Z\}} \sum_{x \in \{0,1\}}\left(q_B \Tr\left[\left(\mcA^\theta_x \otimes \mcB^\theta_x \otimes \bbI_C\right) \rho_{ABC}\right] + q_C \Tr\left[\left(\mcA^\theta_x \otimes \bbI_B \otimes \mcC^\theta_x\right) \rho_{ABC}\right]\right)\\
    &= \frac{1}{3}\sum_{\theta \in \{X,Y,Z\}} \sum_{x \in \{0,1\}}\left(q_B \Tr\left[\left(\mcA^\theta_x \otimes \mcB^\theta_x\right) \rho_{AB}\right] + q_C \Tr\left[\left(\mcA^\theta_x \otimes \mcC^\theta_x\right) \rho_{AC}\right]\right)
\end{align*}

\begin{figure}
    \centering
    \includegraphics[width=0.8\linewidth]{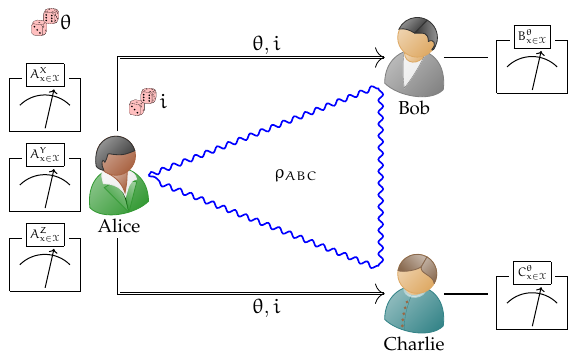}
    \caption{Pictorial representation of the two-player three Pauli-basis OTG game. A tripartite state $\rho_{ABC}$ is initally shared amongst all three participants. Alice randomly selects an Pauli observable $\theta \in \{X,Y,Z\}$ and a random player $i \in \{\text{Bob},\text{Charlie}\}$. She measures her qubit with observable $\theta$ and obtains a measurement outcome $x\in\{0,1\}$. The data $(\theta,i)$ is announced to both Bob and Charlie, and the selected player $i$ performs a measurement to obtain a measurement outcome $x'\in\{0,1\}$. The game is won if $x=x'$.}
    \label{fig:Pauli_OTG}
\end{figure}

\subsection{Equal Player Selection Probability}

We will first consider the three Pauli-basis OTG game with equal player-selection probabilities for Bob and Charlie (that is, $q=q_{\text{Unif}} = (\frac{1}{2},\frac{1}{2})$). Our first result characterizes the value of the three Pauli-basis OTG game.
\begin{theorem}\label{thm:equalPlayerSelection}
    Let $\mcG$ be the three Pauli-basis two player One-at-a-Time Guessing game with referee configuration $\mcR$ and equal player selection probability. Then, $\omega(\mcR,q_{\text{Unif}}) = \omega(\mcR,q_{\text{Unif}},2)=\frac{5}{6}$. Furthermore, $\mcG$ has both a quantum advantage and a super bipartite advantage.
\end{theorem}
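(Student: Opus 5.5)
The plan is to prove $\omega(\mcR,q_{\text{Unif}})\le \frac{5}{6}$ through a reduction to optimal symmetric cloning, and then to attain $\frac{5}{6}$ with an explicit two-qubit-per-player (in fact qubit) strategy. First I rewrite the objective in correlator form. Set $B^\theta=\mcB^\theta_0-\mcB^\theta_1$ and $C^\theta=\mcC^\theta_0-\mcC^\theta_1$, and let $\sigma_\theta$ be Alice's Pauli observable. Then
\[
p_{\mcG}(\mcS)=\frac12+\frac1{12}\sum_{\theta\in\{X,Y,Z\}}\Bigl(\Tr[(\sigma_\theta\otimes B^\theta)\rho_{AB}]+\Tr[(\sigma_\theta\otimes C^\theta)\rho_{AC}]\Bigr).
\]
By Theorem~\ref{thm:compare_vals} and item 3 of Theorem~\ref{thm:game_values}, it suffices to bound symmetric strategies, so $\rho_{AB}=\rho_{AC}$ and $\mcB^\theta=\mcC^\theta$. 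This reduces the problem to bounding a single player's winning probability when $\rho_{AB}$ admits a symmetric extension to $ABC$.

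\textbf{Upper bound.} I view the whole strategy as a $1\to 2$ cloning-type map. Take a purification of $\rho_{ABC}$ and use the steering (Choi) correspondence: Alice obtaining outcome $x$ for $\sigma_\theta$ corresponds to feeding the octahedron state $\ket{\psi^\theta_x}$ into a channel $\Lambda:\mcD(\bbC^2)\to\mcD(\mcH_B\otimes\mcH_C)$. Each player's measurement then gives an estimate of that qubit. The average guessing probability for player $i$ over the six octahedron states equals the average fidelity $\bar F_i$ of an effective $1\to1$ qubit map (the player's measurement followed by preparing $\ket{\psi^\theta_{x'}}$). Because the octahedron is a 2-design, $\bar F_i$ is the universal average fidelity. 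The local clones satisfy the Bužek--Hillery/Werner trade-off, so $\bar F_B+\bar F_C\le \frac53$, which gives $p_{\mcG}\le\frac56$.

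The delicate point, and the step I expect to be the main obstacle, is that $\rho_A$ need not be maximally mixed. In that case the steering correspondence is not an exact channel picture, and the effective map is only trace-nonincreasing after normalization by $\rho_A$. I would handle this in one of two ways. The first is to symmetrize over the Clifford/octahedral group acting on Alice's qubit, which permutes $\{\pm\sigma_X,\pm\sigma_Y,\pm\sigma_Z\}$ and leaves the objective invariant, so that we may assume $\rho_A=\bbI/2$. This twirl can be implemented by shared randomness together with local relabelling by the players, as in the randomized strategies discussed earlier. The second is to certify the bound $\frac56$ directly by a level-1+AB NPA semidefinite program, exhibiting a dual certificate that is a sum-of-squares decomposition of $\frac56-(\text{game operator})$.

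\textbf{Achievability.} I will use a symmetric three-qubit $W$-like state of the form $\ket{\psi}=\alpha\ket{0}_A\ket{00}_{BC}+\beta\ket{1}_A\frac{\ket{01}+\ket{10}}{\sqrt2}$, together with (after Clifford twirling) its images, equivalently the Bužek--Hillery cloner Choi state. Each player measures the same Pauli observable as Alice, possibly with a relabelling. Computing $\rho_{AB}$ and the three correlators $\langle\sigma_\theta\otimes\sigma_\theta\rangle$ should give each player a winning probability of $\frac56$. This also shows $\omega(\mcR,q_{\text{Unif}},2)=\frac56$, and item 2 of Theorem~\ref{thm:game_values} closes the chain of inequalities.

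\textbf{Advantages.} For the classical value, a $1$-bounded player guesses the likelier outcome, so the winning probability is $\frac13\sum_\theta\frac{1+|r_\theta|}{2}\le\frac12+\frac{\sqrt3}{6}\approx0.789<\frac56$, where $r$ is the Bloch vector of $\rho_A$. For $\omega_{bpe}$, the reduction noted after its definition lets me take Charlie classical, winning $\frac12+\frac16\sum_\theta|r_\theta|$. Bob is then at best given a purification of $\rho_A$, which yields Helstrom probability $\frac12+\frac12\|\rho_A^{1/2}\sigma_\theta\rho_A^{1/2}\|_1$ per basis; for Pauli $\sigma_\theta$ and Bloch vector $r$ this norm equals $\max(|r_\theta|,\sqrt{1-|r|^2+r_\theta^2})$. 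The task is to maximize the average of these two players' probabilities over $|r|\le1$. By symmetry the maximizer should lie on $r=t(1,1,1)/\sqrt3$, which gives a one-variable optimization whose maximum I expect to be strictly below $\frac56$ (numerically around $0.80$). This establishes the super bipartite advantage.
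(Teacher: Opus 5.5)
\textbf{The upper bound does not go through as proposed.} The step that fails is identifying player $i$'s guessing probability with the average fidelity of a single effective $1\to1$ qubit map. The player learns $\theta$ before measuring. So the map $\omega\mapsto\sum_{x'}\Tr[\mcB^\theta_{x'}\Lambda_B(\omega)]\,\psi^\theta_{x'}$ is different for each $\theta$, and each of the three maps is evaluated only on the two states of its own basis.

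The 2-design argument needs one channel averaged over all six octahedron states. Your $\bar F_i$ is therefore not a universal average fidelity. The Bu\v{z}ek--Hillery/Werner trade-off constrains the two marginals of one $1\to2$ channel over the whole Bloch sphere, so it says nothing about $\bar F_i$.

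A sanity check shows the identification cannot be right. Any single measure-and-prepare qubit channel has average fidelity at most $2/3$. Yet one player sharing a maximally entangled pair with Alice wins with probability $1$, and your own achieving strategy gives each player $5/6$. For a fixed $\theta$, the joint map $(\Phi^\theta_B\otimes\Phi^\theta_C)\circ\Lambda$ can even give both players fidelity $1$ on basis-$\theta$ inputs: measure in basis $\theta$ and copy. The whole content of the bound $5/6$ lies in using one state $\rho_{ABC}$ consistently across three basis-dependent pairs of measurements, and the cloning picture discards exactly this. The issue you flagged as delicate, $\rho_A\neq\bbI/2$, is harmless: your Clifford twirl via shared randomness and relabelling works.

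What is left is your second option, and this is how the paper argues. It certifies $\omega(\mcR,q_{\text{Unif}})\le 5/6$ only numerically, with the level-2 NPA relaxation. In that relaxation Alice's operators are fixed to be the Paulis, so their algebra (e.g.\ $\sigma_X\sigma_Y=i\sigma_Z$ and anticommutation) enters as linear constraints on the moment matrix; without it the relaxation is trivial. You neither produce the sum-of-squares certificate for $4\,\bbI-\sum_\theta\sigma_\theta\otimes(B^\theta\otimes\bbI+\bbI\otimes C^\theta)\succeq 0$ nor check that level $1{+}AB$ is tight. The paper reports that its level-1 bound is visibly loose near $q_B=q_C$, so tightness is not automatic.

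The remaining parts are fine.
\begin{itemize}
\item For achievability, take $\alpha=\sqrt{2/3}$ and $\beta=1/\sqrt3$, which is $X_A\ket{W'_1}$, with $B^X=X$, $B^Y=-Y$, $B^Z=Z$ (the same for Charlie). All three correlators equal $2/3$ and each player wins with probability $5/6$; no twirl is needed.
\item Your classical bound matches the paper's.
\item Your bipartite computation is a correct analytic replacement for the paper's numerical search. The trace norm is simply $\sqrt{1-|r|^2+r_\theta^2}$. Concavity in $r_\theta^2$ at fixed $|r|$ justifies restricting to the diagonal, and the maximum at $t=1/\sqrt2$ is exactly $\frac12+\frac{\sqrt6}{8}$.
\end{itemize}
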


One particular strategy used by Bob and Charlie to achieve a winning probability of $\frac{5}{6}$ is quite simple from the perspective of their measurements: they copy Alice's measurement up to outcome reordering:
\begin{equation*}
\begin{array}{ccc}
    \mcB^X_0 = \mcC^X_0 = \ket{+}\bra{+}, ~~~ & \mcB^Y_0 = \mcC^Y_0 = \ket{+i}\bra{+i}, ~~~ & \mcB^Z_0 = \mcC^Z_0 = \ket{1}\bra{1} \\
    \mcB^X_1 = \mcC^X_1 = \ket{-}\bra{-}, ~~~ & \mcB^Y_1 = \mcC^Y_1 = \ket{-i}\bra{-i}, ~~~ & \mcB^Z_1 = \mcC^Z_1 = \ket{0}\bra{0} \\
\end{array}
\end{equation*}

The state $\rho_{ABC}$ can then be chosen to be any eigenvector of $T$ corresponding to its maximum eigenvealue of $\frac{5}{6}$, where
\begin{equation*}
    T = \frac{1}{6} \sum_{\theta \in \{X,Y,Z\}} \sum_{x \in \{0,1\}} \mcA^\theta_x \otimes \mcB^\theta_x \otimes \bbI_C + \mcA^\theta_x \otimes \bbI_B \otimes \mcC^\theta_x 
\end{equation*}

Equivalently stated, we can choose $\rho_{ABC}$ to be any ground state of the operator $\bbI-T$. This space of vectors of eigenvalue $\frac{5}{6}$ is a two-dimsneional space spanned by two vectors:
\begin{equation*}
    \begin{array}{cc}
        \ket{W'_1} = \frac{1}{\sqrt{6}}\left(\ket{001}+\ket{010}+2\ket{100}\right), ~~~ & \ket{W'_2} = \frac{1}{\sqrt{6}}\left(2\ket{011}+\ket{101}+\ket{110}\right)   
    \end{array}
\end{equation*}

We note the resemblance of the state $\ket{W'_1}$ to the famous $W$-state on three qubits, which has the property that tracing out any single qubit leaves the remaining two qubits partially entangled: 
\begin{equation*}
    \ket{W} = \frac{1}{\sqrt{3}}\left(\ket{001}+\ket{010}+\ket{100}\right)
\end{equation*}

For the state $\ket{W'_1}$, tracing out Charlie's qubit also gives a partially entangled state between Alice and Bob:
\begin{equation*}
    \begin{array}{cc}
        \rho_{AB} = \Tr_C\bigl[ \ket{W'_1}\bra{W'_1}\bigr] = \frac{5}{6}\underbrace{\ket{\psi'}\bra{\psi'}}_{\text{Entangled}} + \frac{1}{6}\underbrace{\ket{00}\bra{00}}_{\text{Unentangled}}, ~~~ & \ket{\psi'} =  \frac{1}{\sqrt{5}}\left(\ket{01}+2\ket{10}\right) \\
    \end{array}
\end{equation*}

Furthermore, the mixed state $\sigma_{ABC} = \frac{1}{2}\ket{W_1'}\bra{W_1'} + \frac{1}{2}\ket{W_2'}\bra{W_2'}$ has the additional property of being the symmetric extension of a Werner state, up to a local unitary. Indeed,
\begin{equation*}
    \sigma_{AB} = \Tr_C\left[\sigma_{ABC}\right] = \frac{2}{3}\ket{\psi^+}\bra{\psi^+} + \frac{1}{3}\frac{\bbI_{AB}}{4}
\end{equation*}

where $\ket{\psi^+} = \frac{1}{\sqrt{2}}(\ket{01}+\ket{10})$ is maximally entangled. In fact, the Werner state $\sigma_{AB}$ saturates the below inequality for determining symmetric extendibility \cite{MOE:Chen14}.
\begin{equation*}
    \Tr[\sigma_B^2] \geq \Tr[\sigma_{AB}^2] - 4\sqrt{\det(\sigma_{AB})}
\end{equation*}
Consequently, it is impossible to increase the entanglement between Alice and Bob in the mixed state $\sigma_{ABC}$ (by increasing the coefficient of $\ket{\psi^+}\bra{\psi^+}$) without losing the symmetry $\sigma_{AB}=\sigma_{AC}$.

The optimality of this strategy has been verified by the second-level SDP of the NPA hierarchy, which can be found at \cite{Git:Schleppy26}. An overview of the NPA hierarchy can be found in Appendix~\ref{sec:Appendix_tech}. 

To demonstrate both the quantum advantage and the super bipartite advantage for the three Pauli-basis two-player OTG game with uniform player selection probability, we also computed the classical value and the bipartite value of $\mcG$.
\begin{itemize}
    \item $\omega_c(\mcR,q_{\text{Unif}}) = \frac{1}{2}+\frac{\sqrt{3}}{6} \approx 0.7887$
    \item $\omega_{bpe}(\mcR,q_{\text{Unif}}) = \omega(\mcR,q_{\text{Unif}},\underline{d}=(2,1)) = \frac{1}{2}+\frac{\sqrt{6}}{8} \approx 0.8062 > \omega_c(\mcR,q_{\text{Unif}})$
\end{itemize}

The above two values were computed by direct search. The classical value was obtained by searching over all $8$ deterministic strategies, in which the players respond deterministically with a bit $f(\theta)$ for each $\theta \in \{X,Y,Z\}$, and computing the largest eigenvalue of $T=\frac{1}{3}(\mcA^X_{f(X)} + \mcA^Y_{f(Y)} + \mcA^Z_{f(Z)})$. For this game, all classical strategies result in a winning probability of $\frac{1}{2}+\frac{\sqrt{3}}{6}$. The bipartite value was obtained by searching over all $8$ deterministic strategies $f(\theta)$ for Charlie, and for each of Charlie's strategies, computing Bob's optimal guessing probability via an SDP. Indeed, the guessing probability in this circumstance is given by
\begin{align*}
    p_\mcG(\mcS) &= \frac{1}{3}\sum_{\theta \in \{X,Y,Z\}} \sum_{x \in \{0,1\}} \frac{1}{2}\Tr\left[\left(\mcA^\theta_x \otimes \mcB^\theta_x\right) \rho_{AB}\right] + \frac{1}{2}\Tr\left[\left(\mcA^\theta_x \otimes \mcC^\theta_x\right) \rho_{AC}\right]\\
    &= \frac{1}{6}\sum_{\theta \in \{X,Y,Z\}} \sum_{x \in \{0,1\}}\Tr\left[\mcA^\theta_x \rho^\theta_x\right] + \frac{1}{6}\Tr\left[\left(\sum_{\theta \in \{X,Y,Z\}}\mcA^\theta_{f(\theta)}\right) \rho_{A}\right]
\end{align*}
where $\rho^\theta_x = \Tr_A\left[\left(\bbI_A \otimes \mcB^\theta_x\right)\rho_{AB}\right]$. Then, the SDP takes the form
\begin{align*}
    \text{maximize} ~~~ & \frac{1}{6}\sum_{\theta \in \{X,Y,Z\}} \sum_{x \in \{0,1\}}\Tr\left[\mcA^\theta_x \rho^\theta_x\right] + \frac{1}{6}\Tr\left[\left(\sum_{\theta \in \{X,Y,Z\}}\mcA^\theta_{f(\theta)}\right) \rho_{A}\right]\\
    \text{subject to} ~~~ &\rho_A,\rho^\theta_x \in \mcP(\mcH_A) ~~ \text{for all} ~\theta \in \{X,Y,Z\},~ x \in \{0,1\}\\
    & \sum_{x \in \{0,1\}} \rho^\theta_x = \rho_A ~~ \text{for all} ~\theta \in \{X,Y,Z\}\\
    & \Tr[\rho_A] = 1
\end{align*}

The solution to the SDP is then used to generate POVMs for Bob on a qubit. Once again, all classical deterministic strategies for Charlie give rise to the optimal value bipartite value $\frac{1}{2}+\frac{\sqrt{6}}{8}$. Taking the deterministic strategy $f(X)=f(Y)=f(Z)=0$ for Charlie, the optimal shared state between Alice and Bob determined from the SDP is (up to a local unitary on Bob's qubit)
\begin{equation*}
    \ket{\psi} = \cos\left(\frac{\pi}{8}\right)\ket{\phi_1} \otimes \ket{0} + \sin\left(\frac{\pi}{8}\right)\ket{\phi_2} \otimes \ket{1}
\end{equation*}
where $\ket{\phi_1}$ and $\ket{\phi_2}$ are the positive and negative eigenstates of $\frac{1}{\sqrt{3}}(X+Y+Z)=\ket{\phi_1}\bra{\phi_1} - \ket{\phi_2}\bra{\phi_2}$, respectively. It is curious to note that the optimal strategy with bipartite entanglement does not maximally entangle Alice and Bob. However, if that were the case, then $\rho_A=\frac{1}{2}\bbI_A$, and Charlie would always guess correctly with probability $\frac{1}{2}$. The optimal strategy with bipartite entanglement consequently trades-off entanglement between Alice and Bob, to reduce the classical uncertainty that Charlie sees.

Since the value of the three Pauli-basis OTG game with two players exceeds both the classical and bipartite values, we certify that it has both a quantum and super bipartite advantage.

\subsection{Varying Player Selection Probability}

In a fully general analysis of the two-player, three Pauli-basis OTG game, we aim to identify the optimal guessing probability $\omega(\mcR,q)$ for an arbitrary player selection probability distribution $q=(q_B,q_C)$. There are two reasons why we may consider an OTG game with asymmetric player selection probabilities. Firstly, it tackles directly the question of how much entanglement is necessary between the referee and individual players to play optimally, when the players have varying degrees of priority in the game; for instance, if Bob is selected with probability $q_B=0.75$ and Charlie with probability $q_C=0.25$, it is conceivable to assume that an optimal strategy may increase the entanglement between Alice and Bob, at the expense of the entanglement between Alice and Charlie (relative to the equal player selection probability case). Consequently, from the perspective of resource sharing, it is critical to understand how the player selection probability affects the game's optimal strategy, and how entanglement between the players should be distributed for optimal performance. Secondly, the premise of varying the player selection probability permits more broadly the study of multipartite correlations in quantum networks, and the underlying states that can achieve such correlations.

\begin{figure}
    \centering
    \includegraphics[width=0.8\linewidth]{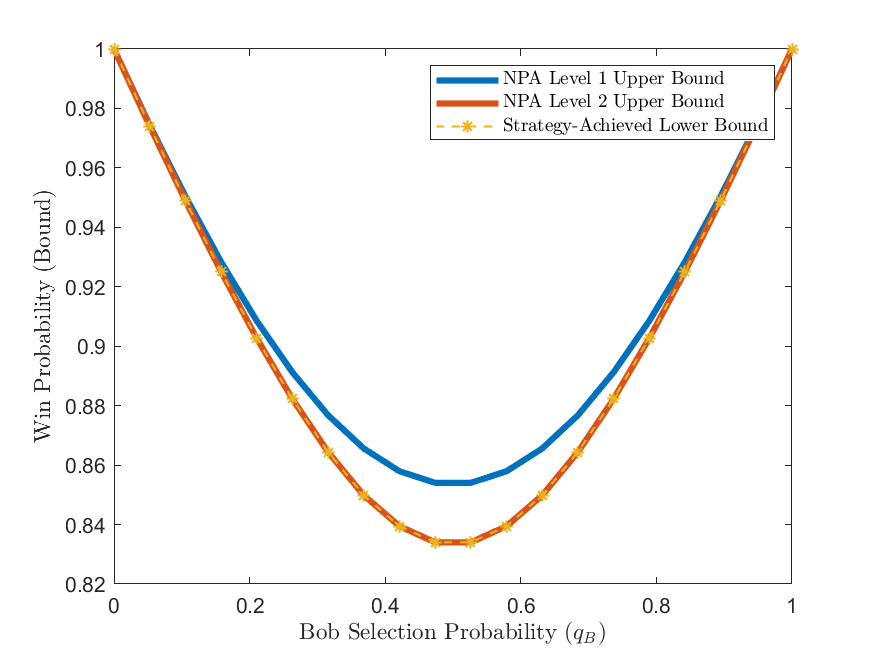}
    \caption{Bounds on the value $\omega_2(\mcG,q)$ of the two-player three Pauli-basis OTG game as a function of the player selection probability $q=(q_B,q_C)$. The lower bound from the parametrized qubit strategy agrees well with the upper bound from the second level SDP of the NPA hierarchy, up to numerical precision. The upper bound from the first level SDP of the NPA hierarchy is notably looser when the player selection probabilities are roughly equal.}
    \label{fig:PauliOTGPlot}
\end{figure}

Recall the expression for the winning probability of the three Pauli-basis OTG game when the player selection probability $q=(q_B,q_C)$ is arbitrary

\begin{equation*}
    p_{\mcG}(\mcS) = \frac{1}{3}\sum_{\theta \in \{X,Y,Z\}} \sum_{x \in \{0,1\}}\left(q_B \Tr\left[\left(\mcA^\theta_x \otimes \mcB^\theta_x\right) \rho_{AB}\right] + q_C \Tr\left[\left(\mcA^\theta_x \otimes \mcC^\theta_x\right) \rho_{AC}\right]\right)
\end{equation*}

We have the following characterization of the two-player three-basis OTG game with varying player selection probability.

\begin{figure}
    \centering
    \includegraphics[width=0.8\linewidth]{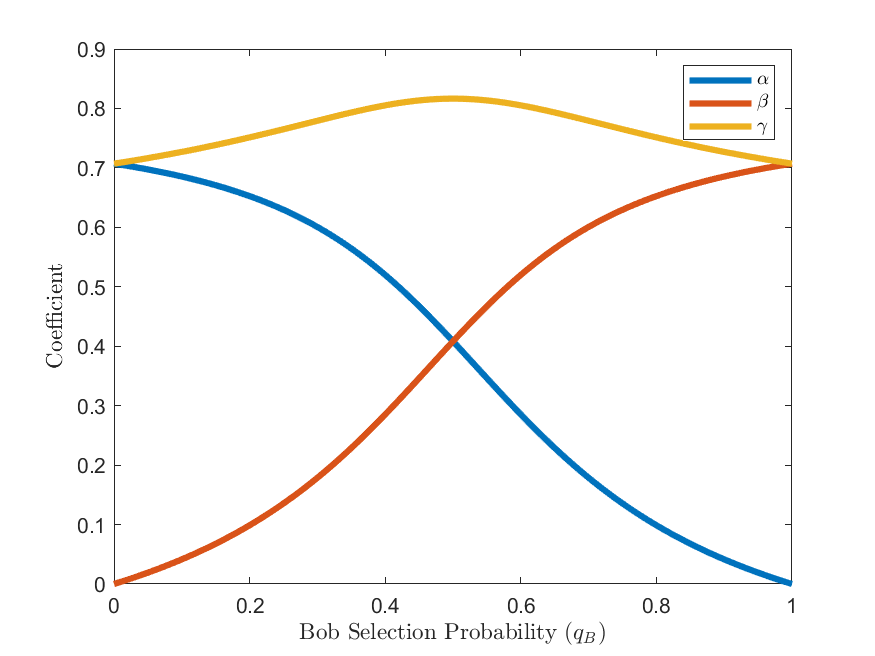}
    \caption{Coefficients of the parameterized optimal state $\ket{\psi}=\alpha\ket{001}+\beta\ket{010}+\gamma\ket{100}$ for the two-player three Pauli-basis OTG game, as a function of the player selection probability $q=(q_B,q_C)$.}
    \label{fig:PauliOTGCoeff}
\end{figure}

\begin{theorem}\label{thm:varPlayerSelection}
    Let $\mcG$ be the two-player three Pauli-basis game with player selection probability $q=(q_B,q_C)$. The value $\omega_2(\mcG,q)$ of the game is given by

    \begin{equation*}
        \omega_2(\mcG,q) = \frac{2}{3}+\frac{1}{6}\sqrt{1+3(q_B-q_C)^2}
    \end{equation*}

    Furthermore, the optimal winning probability is achieved for when the players have qubits, i.e. $\omega(\mcR,q,2)=\omega(\mcR,q)$.
\end{theorem}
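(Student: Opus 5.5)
The plan is to prove matching lower and upper bounds on the winning probability, after first rewriting the winning probability in correlator form. Write $B_\theta = \mcB^\theta_0-\mcB^\theta_1$ and $C_\theta = \mcC^\theta_0-\mcC^\theta_1$ for the players' $\pm1$-valued observables, and let $\sigma_\theta$ denote the Pauli $\theta$ on Alice's qubit. Then $\sum_x \Tr[(\mcA^\theta_x\otimes \mcB^\theta_x)\rho_{AB}] = \frac12+\frac12\langle \sigma_\theta\otimes B_\theta\rangle$, and likewise for Charlie. Averaging over $\theta$ gives
\begin{equation*}
p_\mcG(\mcS) = \frac12 + \frac16\Bigl\langle \sum_{\theta\in\{X,Y,Z\}} \sigma_\theta\otimes\bigl(q_B\, B_\theta\otimes\bbI_C + q_C\,\bbI_B\otimes C_\theta\bigr)\Bigr\rangle_{\rho_{ABC}} .
\end{equation*}
So the claim is equivalent to showing that the largest expectation of the Bell-type operator $H=\sum_\theta \sigma_\theta\otimes O_\theta$, with $O_\theta = q_B B_\theta + q_C C_\theta$, is exactly $1+\sqrt{1+3(q_B-q_C)^2}$. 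As sanity checks, this gives $2$ at $q_{\text{Unif}}$, matching Theorem~\ref{thm:equalPlayerSelection}, and $3$ at $q=(1,0)$.

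\textbf{Lower bound.} Fix qubits for Bob and Charlie, with the measurements used in the uniform case: copy Alice's $X$ and $Y$ measurements and flip the outcome labels for $Z$, so $B_\theta=C_\theta=\sigma_\theta$ for $\theta=X,Y$ and $B_Z=C_Z=-Z$. Then $H$ is a fixed $8\times 8$ matrix, and I will check that it preserves the single-excitation subspace $\mathrm{span}\{\ket{001},\ket{010},\ket{100}\}$. This is consistent with the form of $\ket{W'_1}$ and with Figure~\ref{fig:PauliOTGCoeff}. On this subspace $H$ restricts to an explicit $3\times3$ real symmetric matrix in $q_B,q_C$. The terms $XX+YY$ hop the excitation between Alice and each player with amplitude $2q_B$ or $2q_C$, and the $-ZZ$ terms are diagonal. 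Its top eigenvalue can be computed by hand; using $q_B+q_C=1$, I expect the characteristic polynomial to factor with a root $1+\sqrt{1+3(q_B-q_C)^2}$. The corresponding eigenvector gives the coefficients $(\alpha,\beta,\gamma)$ of the optimal W-like state. This simultaneously yields $\omega(\mcR,q,2)\ge$ the claimed value.

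\textbf{Upper bound.} Here Bob's and Charlie's observables are arbitrary $\pm1$ operators on arbitrary spaces, and $[B_\theta,C_{\theta'}]=0$. I would first square the operator: $H^2 = \bbI\otimes\sum_\theta O_\theta^2 + i\sum_{k}\epsilon_{ijk}\,\sigma_k\otimes O_iO_j$. Since $O_\theta^2 = q_B^2+q_C^2+2q_Bq_C B_\theta C_\theta$, this bounds the first term by a function of the correlators $\langle B_\theta C_\theta\rangle$. The commutator term is controlled by the monogamy constraint that the anticommutator structure of $B_\theta$ (respectively $C_\theta$) cannot make Bob and Charlie simultaneously perfectly correlated with all three Paulis. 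A cleaner route, which I would try first, is to extract an exact sum-of-squares certificate from the level-2 NPA dual (Appendix~\ref{sec:Appendix_tech}). Concretely, I would seek $\lambda(q)\,\bbI - H = \sum_k P_k^\dagger P_k$ with $\lambda(q) = 1+\sqrt{1+3(q_B-q_C)^2}$, where the $P_k$ are linear combinations of $\sigma_\theta$, $B_\theta$, $C_\theta$ and their degree-two products. The square root suggests parametrizing the SOS coefficients by $s=\sqrt{1+3(q_B-q_C)^2}$ and the optimal $\alpha,\beta,\gamma$: each $P_k$ should annihilate the optimal state, which fixes the ansatz up to a few scalars that are then matched symbolically.

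The main obstacle is this analytic upper bound. The lower bound is a finite eigenvalue computation, whereas the upper bound requires an SOS decomposition valid for all $q$ and all dimensions. The numerics (level 2 tight, level 1 loose near $q_{\text{Unif}}$) indicate that genuinely second-order products $B_\theta C_{\theta'}$ must appear in the certificate, so guessing the right monomial basis is the crux. If a uniform symbolic certificate resists, a fallback is to reduce the problem. By Jordan's lemma applied separately to Bob's and to Charlie's observables, one can reduce to qubit blocks. One would then argue via convexity (randomized strategies, as in Theorem~\ref{thm:game_values}) that the qubit optimum bounds the general value, and finish with a direct optimization over qubit strategies.
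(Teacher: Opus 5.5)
Your correlator rewriting and lower bound are correct, and they use the same strategy as the paper. With $B_\theta=C_\theta=(X,Y,-Z)$ and $q_B+q_C=1$, restrict $H$ to $\mathrm{span}\{\ket{100},\ket{010},\ket{001}\}$. It becomes the matrix with rows $(1,2q_B,2q_C)$, $(2q_B,q_B-q_C,0)$, $(2q_C,0,q_C-q_B)$. Up to sign, its characteristic polynomial is $(\lambda+1)\bigl(\lambda^2-2\lambda-3(q_B-q_C)^2\bigr)$. The top root is $1+\sqrt{1+3(q_B-q_C)^2}$, and its eigenvector reproduces the paper's $\ket{W(q)}$.

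The gap is the upper bound, and both the value formula and the claim $\omega(\mcR,q,2)=\omega(\mcR,q)$ rest on it; the latter is a statement about strategies of every dimension. The paper does not prove this bound analytically. It only reports that the level-2 NPA relaxation over the correlator alphabet $\Gamma$ matches the lower bound up to numerical precision. Your proposal does not supply a certificate either.

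\emph{The squaring step only restates the target.} Let $G_B=\sum_\theta\sigma_\theta\otimes B_\theta$, with $G_C$ defined analogously. Then $\{G_B,G_C\}=2\,\bbI_A\otimes\sum_\theta B_\theta C_\theta$, so $H^2=q_B^2G_B^2+q_C^2G_C^2+2q_Bq_C\sum_\theta B_\theta C_\theta$. Bounding this by $\lambda(q)^2$ is exactly the monogamy statement you need to prove. Squaring itself loses nothing, because $(B,C)\mapsto(-B,-C)$ sends $H\mapsto -H$, but the hard part is untouched.

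\emph{The SOS is only an ansatz.} If you pursue it, take the $P_k$ from $\mathrm{span}\,\Gamma_{\le 2}$, since that is the relaxation observed to be tight. It contains degree-three monomials such as $(\sigma_XB_X)(\sigma_YC_Y)=i\,\sigma_ZB_XC_Y$, which your degree-$\le 2$ basis omits.

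\emph{The Jordan's lemma fallback fails.} Jordan's lemma concerns two projections, but each player has three binary observables. Three projections can generate an irreducible algebra larger than $M_2$: for example, $\ket{e_1}\bra{e_1}$, $\ket{e_2}\bra{e_2}$ and the projector onto $(1,1,1)/\sqrt{3}$ generate $M_3$. So there is no reduction to qubit blocks, and that reduction is precisely what the theorem's second claim asserts.

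As a hint only: the claimed value equals $\max(q_BF_B+q_CF_C)$ over the optimal asymmetric $1\to 2$ qubit cloning region $\sqrt{(1-F_B)(1-F_C)}\ge\frac12-(1-F_B)-(1-F_C)$. However, cloning bounds concern qubit outputs measured in Pauli bases, so they do not by themselves cover arbitrary player measurements.
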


Observe in the case of equal player selection probability $q_B=q_C=\frac{1}{2}$, the value of the game from Theorem~\ref{thm:varPlayerSelection} reduces to $\frac{5}{6}$, which is equal to the value obtained in Theorem~\ref{thm:equalPlayerSelection}. The optimal winning probability is obtained using a $W$-like state $\ket{W(q)}$
\begin{align*}
    &\ket{W(q)} = \alpha\ket{001} + \beta\ket{010} + \gamma\ket{100}&\\
    \alpha=\frac{2q_B - \sqrt{1+3(q_B-q_C)^2}}{2(q_B-q_C)}\gamma, ~~~~~& \beta=\frac{2q_C - \sqrt{1+3(q_B-q_C)^2}}{2(q_C-q_B)}\gamma, & \gamma=\sqrt{\frac{1}{3}+\frac{\sqrt{1+3(q_B-q_C)^2}}{3(1+3(q_B-q_C)^2)}.}
\end{align*}

The coefficients are plotted in Figure~\ref{fig:PauliOTGCoeff}. The state $\ket{W(q)}$ interpolates between states that are maximally entangled between Alice/Bob, and Alice/Charlie, namely $\ket{W(q)}=\frac{1}{\sqrt{2}}(\ket{010}+\ket{100})$ when $q_B=1$ and $\ket{W(q)}=\frac{1}{\sqrt{2}}(\ket{001}+\ket{100})$ when $q_C=1$. In the limit as $q_B,q_C\to \frac{1}{2}$, the state $\ket{W(q)}$ converges to the state $\ket{W'_1}$ in the previous section.

The optimality of $\omega(\mcR,q)$ has been certified by the second-level of the NPA hierarchy, up to numerical precision, which can be seen in Figure~\ref{fig:PauliOTGPlot}. The code used to compute the optimal values and plot Figures~\ref{fig:PauliOTGPlot} and~\ref{fig:PauliOTGCoeff} can be found at \cite{Git:Schleppy26}.



\section{Concluding Remarks and Broader Context}
In this work, we formulated One-at-a-Time Guessing games, a new class of extended non-local games which tests the strength of correlations between a characterized quantum referee and arbitrary non-communicating quantum players. As a Measurement-Result Quantum Guessing game, it lies at the extreme opposite end of games when compared to Monogamy-of-Entanglement games. The framework of One-at-a-Time Guessing games is flexible, as the player selection probability distribution can be modified to incorporate any number of players. The emphasis on bipartite correlations makes One-at-a-Time Guessing games more sensitive to entanglement-assisted advantages than Monogamy-of-Entanglement games, offering a new method to measure the strength of pairwise quantum correlations in quantum networks with a central node.

We proved a number of results regarding general Monogamy-of-Entanglement games and One-at-a-Time Guessing games, with special attention given to how the values of the games scales with increasing player pool and changing player selection probability. Crucially, for One-at-a-Time Guessing games, the value of the game for a given referee configuration respects the majorization order of the player selection probability distribution. This result is supplemented with a thorough analysis of a three Pauli-basis One-at-a-Time Guessing game with two players, in which we characterized the value of the game exactly as a function of the player selection probability distribution. A consequence of our analysis is that we extracted a family of states $\ket{W(q)}$ which interpolate between maximally entangled states (between Alice \& Bob vs. between Alice \& Charlie), and permit the players to guess optimally. When the players are selected equally, the optimal shared state can be taken to be a symmetric extension of a Werner state that saturates the symmetric extension inequality criterion. The family of states $\ket{W(q)}$ provide an insight into how tripartite entanglement may be optimally distributed to maximize the strengths of pairwise correlations with relative priority.

\begin{figure}[t]
    \centering
    \includegraphics[width=1\linewidth]{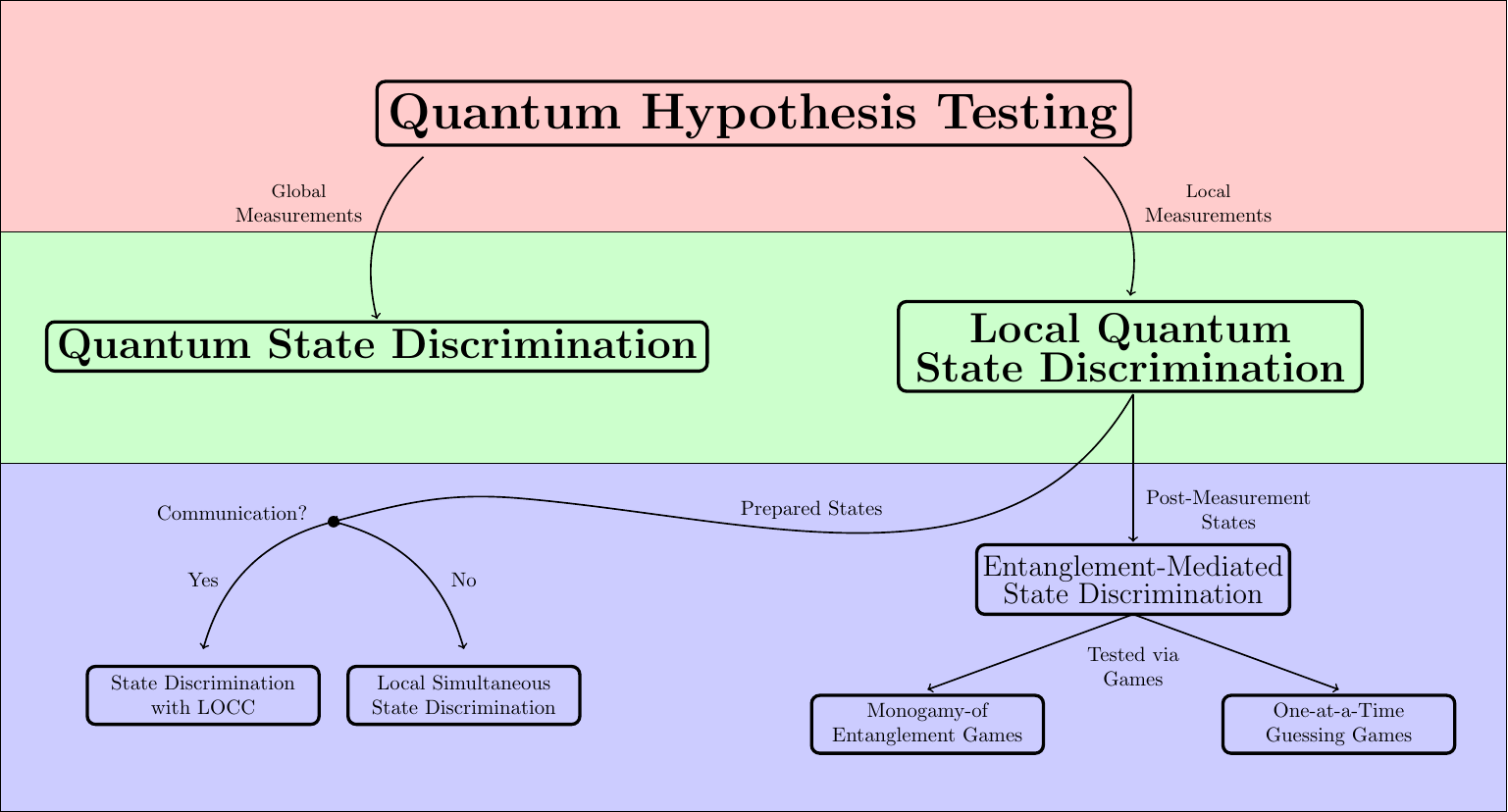}
    \caption{An illustration of various scenarios related to non-local state discrimination and hypothesis testing. Both Monogamy-of-Entanglement games and One-at-a-Time Guessing games can be viewed as examples of entanglement-mediated local hypothesis tests.}
    \label{fig:RelatedScenarios}
\end{figure}

A bit of commentary is in order for how One-at-a-Time Guessing games fit into the larger picture of local guessing and hypothesis testing. The idea of local, multi-partite state discrimination is not new and dates back to \cite{HT:Peres91}. In this setting, an ensemble of bipartite states $\{\rho_x\}_{x \in \mcX}$ are randomly distributed between two parties, who perform local measurements to obtain a guess for the label $x \in \mcX$. Several works along this line explored state discrimination with spatially-separated but communicating parties, for instance, in relation to quantum key distribution \cite{HT:Goldenberg95,Guess:Bennett1999,HT:Walgate00}. This framework would eventually be known as state discrimination with LOCC (local operations and classical communication). 

A useful weakening of state discrimination with LOCC is to exclude communication between the parties. This setting is known as Local Simultaneous State Discrimination, introduced in \cite{Guess:Majenz24}. Here, the parties are no longer strictly collaborating in the usual sense, since they are unable to agree upon a unified guess for the label $x\in \mcX$ via classical communication. Rather, the objective for the parties is to maximize the probability that they \textit{simultaneously} guess the label $x$. It is incorrect to say, however, that the situation reduces to independent instances of quantum state discrimination on each subsystem, even when the states are purely classical. A series of recent works \cite{Guess:Majenz24,Guess:Escola25,Guess:Pan22} have explored this topic more in depth, including separation of values (i.e. classical, quantum, non-signaling) and computational complexity issues.

What the current paper tackles, on the other hand, is slightly different. In both Monogamy-of-Entanglement games and One-at-a-Time Guessing games, the players seek to guess the outcome of a random measurement performed by a referee on their share of a multi-partite state. In Monogamy-of-Entanglement games, when a particular measurement is selected, the ensemble of post-measurement states on the players' subsystems gives rise to an instance of local simultaneous state discrimination, as noted in \cite{Guess:Majenz24}. What distinguishes to two scenarios, however, is that the selection of multi-partite state as part of the players' strategy also affects the post-measurement on the players' subsystems. Since the players' state discrimination task is fundamentally linked to how the referee's measurements steer the players' state, we will refer to this scenario broadly as entanglement-mediated local hypothesis tests.

The intricacy of both Monogamy-of-Entanglement games and One-at-a-Time guessing games arises from, among other things, the incompatibility of the referee's measurements, the limits of multi-partite entanglement due to Monogamy-of-Entanglement, and the induced local simultaneous state discrimination problem. In particular, the challenge of designing multi-partite states which allow players to share non-trivial, super-classical correlations with a referee makes  difficult. Restrictive tests such as Monogamy-of-Entanglement games are useful in showing the impossibility of various multi-partite correlations arising from any multi-partite state. On the contrary, One-at-a-Time Guessing games may serve as a valuable tool for identifying and characterizing multi-partite states which do indeed exhibit non-trivial, super-classical correlations when measured. This claim is substantiated by the family of optimal states $\ket{W(q)}$ that naturally arose from the optimal strategy of the three Pauli-basis One-at-a-Time Guessing game. Furthermore, for a particular set of referee measurements, it is possible that the optimal states for the One-at-a-Time Guessing game may aid the design of optimal states for the corresponding Monogamy-of-Entanglement games, and could be useful to identify Monogamy-of-Entanglement games with a non-trivial quantum advantage. Whether this is true, however, is unknown, and we leave this as an open question.



\bibliography{ref}
\bibliographystyle{IEEEtran}


\appendices

\section{Overview of Techniques}\label{sec:Appendix_tech}

This section will provide a brief summary of tools and techniques used throughout the proofs of the main results of this paper. The informed reader may choose to skip this section.

\subsection{Quantum State Discrimination}

Quantum state discrimination (also sometimes called quantum hypothesis testing) is a fundamental task in the theory of quantum information. Given a known collection of states $\rho_x \in \mcD(\mcH)$ occuring with probability $p_x$, the goal of quantum state discrimination is to determine a measurement on $\mcH$ that differentiates between the states as optimally as possible. There are two ways in which the discrimination may be optimal, either by minimizing the probability of misclassification error (i.e. a measurement declares $x$ when the true state is not $\rho_x$), or maximizing the probability of correct state identification subject to no misclassification error (by allowing for a `null' measurement outcome reflecting uncertainty of the true state). We wll focus on the former criterion.

Recall that any measurement statistics of a state $\rho \in \mcD(\mcH)$ can be modeled with a POVM $\mcM = \{\mcM_x\}_{x \in \mcX}$ on $\mcH$. The probability that a POVM $\mcM$ correctly identifies the label $x \in \mcX$ of an ensemble $(p_x,\rho_x)_{x \in \mcX}$ may be represented as
\[p_{\text{guess}} = \sum_{x \in \mcX} p_x \Tr\left[\mcM_x \rho_x\right]\]

where we will call $p_{\text{guess}}$ the guessing probability. To maximize the guessing probability for a given ensemble $(p_x,\rho_x)_{x \in \mcX}$, the following optimization problem must be considered.
\begin{align*}
    \text{maximize} & ~~~ \sum_{x \in \mcX} p_x \Tr\left[\mcM_x \rho_x\right]\\
    \text{subject to} & ~~~ \sum_{x \in \mcX} \mcM_x = \bbI\\
    & ~~~ \mcM_x \in \mcP(\mcH)
\end{align*}

The above optimization problem is simply a semidefinite program (albeit in non-standard form), and therefore can be solved efficiently. In case $|\mcX|=2$, there is a closed-form expression for the optimal guessing probability, given by the Holevo-Helstrom measurement \cite{HT:Helstrom69,HT:Holevo73}:
\[p_{\text{guess}}\leq \frac{1}{2} + \frac{1}{2} \left|\left| p_0\rho_0 - p_1\rho_1\right|\right|_1\]

where $||X||_1=\Tr\left[\sqrt{X^\dagger X}\right]$ is the Schatten 1-norm, equivalent to the sum of singular values of $X$, and $\mcX=\{0,1\}$ without loss of generality. Equality is achieved when the POVM elements $\mcM_0,\mcM_1$ are the projections onto the spaces spanned by the eigenvectors with positive and negative eigenvalues of $X=p_0\rho_0 - p_1\rho_1$, respectively (the projection onto the kernel of $X$ can be assigned arbitarily to the POVM elements).

\subsection{NPA Hierarchy Techniques}

We give a brief summary of techniques from the NPA hierarchy here, tailored towards our use of it for the three Pauli-basis OTG game. For a full treatment of the NPA hierarchy, refer to \cite{NPA:Navascues08,NPA:Pironio10}. The specific approach we will follow is similar to \cite{MOE:Botteron24}. Our numerical results from the use of SDPs from the NPA hierarchy can be found at \cite{Git:Schleppy26}. 

Recall the expression for the winning probability of the three Pauli-basis OTG game.
\begin{align*}
    p_{\mcG}(\mcS) &= \frac{1}{3}\sum_{\theta \in \{X,Y,Z\}} \sum_{x \in \{0,1\}}\left(q_B \Tr\left[\left(\mcA^\theta_x \otimes \mcB^\theta_x \otimes \bbI_C\right) \rho_{ABC}\right] + q_C \Tr\left[\left(\mcA^\theta_x \otimes \bbI_B \otimes \mcC^\theta_x\right) \rho_{ABC}\right]\right)\\
    &= \frac{1}{3}\sum_{\theta \in \{X,Y,Z\}} \sum_{x \in \{0,1\}}\left(q_B \Tr\left[\left(\mcA^\theta_x \otimes \mcB^\theta_x\right) \rho_{AB}\right] + q_C \Tr\left[\left(\mcA^\theta_x \otimes \mcC^\theta_x\right) \rho_{AC}\right]\right)
\end{align*}

The value of the OTG game $\omega(\mcR,q)$ is found by optimizing $p_{\mcG}(\mcS)$ over all strategies. Note that we may assume that all POVMs are PVMs, and the the shared state $\rho_{ABC}$ is a pure state $\ket{\psi}\bra{\psi}$, following a standard purification argument and application of Naimark's theorem as in \cite{MoE:Tomamichel13}. Unfortunately, $p_{\mcG}(\mcS)$ is difficult to optimize for two reasons: 1) it is non-convex in the strategy $\mcS$, due to the appearance of products of variables (e.g. $\mcB^\theta_x$ and $\rho_{AB}$), and 2) there is no upper bound on the dimension of Bob and Charlie's Hilbert spaces.

To manage this, we resort to constraint relaxations in order to recover convexity. Firstly, we will rewrite $p_{\mcG}(\mcS)$ in terms of the \textit{observables} measured by the referee and all players:
\begin{align*}
p_{\mcG}(\mcS) &= \frac{1}{3}\sum_{\theta \in \{X,Y,Z\}} \sum_{x \in \{0,1\}} q_B \Tr\left[ \left(\frac{\bbI_A + (-1)^x \mcA^\theta}{2} \otimes \frac{\bbI_B + (-1)^x \mcB^\theta}{2}\right) \rho_{AB}\right]\\
 &~~~~~~~~~~~~~~~~~~~~+ q_C \Tr\left[ \left(\frac{\bbI_A + (-1)^x \mcA^\theta}{2} \otimes \frac{\bbI_C + (-1)^x \mcC^\theta}{2}\right) \rho_{AC}\right]\\
 &= \frac{1}{2} + \frac{1}{6}\sum_{\theta \in \{X,Y,Z\}} q_B\Tr\left[ \left(\mcA^\theta \otimes \mcB^\theta\right) \rho_{AB}\right] + q_C\Tr\left[\left(\mcA^\theta \otimes \mcC^\theta\right)\rho_{AC}\right]\\
 &= \frac{1}{2} + \frac{1}{6}\sum_{\theta \in \{X,Y,Z\}} q_B \bra{\psi}\mcA^\theta \otimes \mcB^\theta \otimes \bbI_C \ket{\psi} + q_C \bra{\psi}\mcA^\theta \otimes \bbI_B \otimes \mcC^\theta \ket{\psi}
\end{align*}

where $\mcA^\theta = \mcA^\theta_0 - \mcA^\theta_1$, $\mcB^\theta = \mcB^\theta_0 - \mcB^\theta_1$, and $\mcC^\theta = \mcC^\theta_0 - \mcC^\theta_1$ are all unitary matrices squaring to identity. Crucially, observe that $p_{\mcG}(\mcS)$ is an affine function in terms of the terms $\bra{\psi}\mcA^\theta \otimes \mcB^\theta \otimes \bbI_C \ket{\psi}$ and $\bra{\psi} \mcA^\theta \otimes \bbI_B \otimes \mcC^\theta \ket{\psi}$ (i.e expectations of the joint observables). Therefore, if we could determine necessary semidefinite constraints on the joint observable expectations, we can upper bound $p_{\mcG}(\mcS)$ using a semidefinite program. This is exactly what the NPA hierarchy of semidefinite programs attempts to do.

From here on, we write $\mcA^\theta = \mcA^\theta \otimes \bbI_B \otimes \bbI_C$, $\mcB^\theta = \bbI_A \otimes \mcB^\theta \otimes \bbI_C$, and $\mcC^\theta = \bbI_A \otimes \bbI_B \otimes \mcC^\theta$, and note that observables on distinct subsystems necessarily commute with each other (e.g. $\mcA^\theta \mcB^{\theta'} = \mcB^{\theta'}\mcA^\theta)$. Recall also that we have fixed Alice's observables to be the Pauli observables (i.e. $\mcA^X=X,\mcA^Y=Y,\mcA^Z=Z)$. Now, we form an alphabet $\Gamma$ whose symbols are comprised of products of observables from a \textit{formal} strategy $\mcS$ (that is, we do not assume any relationships among Bob's observables or among Charlie's observables). In our setup, we will take the alphabet 
\[\Gamma = \{\mcA^X\mcB^X,\mcA^Y\mcB^Y,\mcA^Z\mcB^Z,\mcA^X\mcC^X,\mcA^Y\mcC^Y,\mcA^Z\mcC^Z\}.\]

Next, we define a word $W$ to be any finite product of symbols from our alphabet $\Gamma$, e.g. $W=(\mcA^X\mcB^X)(\mcA^Y\mcC^Y)(\mcA^X\mcC^X)$, and we denote by $\Gamma_{\leq k}$ the set of all words of length at most $k$. Note that the empty string of length zero will be included in $\Gamma_{\leq k}$.

For any formal strategy $\mcS$, we may define the following $k^{\text{th}}$-level correlation matrix $Q_k \in \bbC^{|\Gamma_{\leq k}| \times |\Gamma_{\leq k}|}$: For every word $W \in \Gamma_k$, define the vector $\ket{W}=W\ket{\psi}$. Then, $Q_k$ will simply be the Gram matrix associated with the vectors $\{\ket{W}:W \in \Gamma_k\}$, i.e.
\[(Q_k)_{W,V} = \bra{W}\ket{V} = \bra{\psi}W^\dagger V\ket{\psi}.\]

There are several properties that $Q_k$ will automatically satisfy due to its construction. Firstly, since $Q_k$ is a Gram matrix, it is necessarily positive semidefinite (i.e. $Q_k \in \mcP(\bbC^{|\Gamma_{\leq k}|}))$. Secondly, denoting the empty string by $\bbI$, the normalization of the state $\ket{\psi}$ implies $(Q_k)_{\bbI,\bbI}=\bra{\psi}\ket{\psi}=1$. Lastly, due to the commutation of observables on distinct subsystems, as well as the anticommutation of some observables on Alice's subsystem, it may occur that some words are actually equivalent to each other up to sign. For instance, one has $(\mcA^X\mcB^X)(\mcA^X\mcC^X) = (\mcA^X\mcC^X)(\mcA^X\mcB^X)$, but $(\mcA^Y\mcB^Y)(\mcA^Z\mcC^Z)=-(\mcA^Z\mcC^Z)(\mcA^Y\mcB^Y)$. Correspondingly, $Q_k$ must necessarily satisfy some linear constraints in its entries.

We make this idea precise here. We say that two words $W,V$ are equivalent (and write $W\sim V$) if for every assignment of $\mcB^\theta$, $\mcC^\theta$ to unitary matrices squaring to identity, the equality $W=V$ holds. For instance, one has $(\mcA^X\mcB^X)(\mcA^X\mcC^X) \sim (\mcA^X\mcC^X)(\mcA^X\mcB^X)$ since $\mcA^X$, $\mcB^X$ and $\mcC^X$ commute, but $(\mcA^Y\mcB^Y)(\mcA^Z\mcB^Z)\not \sim (\mcA^Z\mcB^Z)(\mcA^Y\mcB^Y)$, since equality is not guarenteed under any assignment of operators $\mcB^Y$ and $\mcB^Z$. More generally, for any finite sets of words $\{W_i\}_i$ and $\{V_j\}_j$, we say that the formal linear combinations $\sum_i \alpha_i W_i$ and $\sum_j \beta_jV_j$ are equivalent (and write $\sum_i\alpha_iW_i \sim \sum_j \beta_j V_j$) if for every assignment of operators $\mcB^\theta,\mcC^\theta$, the equality $\sum_i \alpha_i W_i=\sum_j\beta_jV_j$ holds. Consequently, equivalence of formal linear combinations of words naturally imposes constraints on the correlation matrix $Q_k$
\begin{align*}
    \sum_i \alpha_i W^\dagger_i V_i \sim \sum_j \beta_j \tilde{W}^\dagger_j \tilde{V}_j \implies &\sum_i \alpha_i(Q_k)_{W_i,V_i} =  \bra{\psi}\left(\sum_i \alpha_i W^\dagger_i V_i\right)\ket{\psi}\\ 
    & = \bra{\psi}\left(\sum_j \beta_j \tilde{W}^\dagger_j \tilde{V}_j\right)\ket{\psi} = \sum_j \beta_j(Q_k)_{\tilde{W}_j,\tilde{V}_j}
\end{align*}
Constraints of this form are called \textit{NPA constraints}, and we will write $Q_k \in \text{NPA}_k$ to indicate that the equality $\sum_i \alpha_i (Q_k)_{W_i,V_i} = \sum_j \beta_j (Q_k)_{\tilde{W}_j,\tilde{V}_j}$ holds whenever $\sum_i \alpha_i W^\dagger_i V_i \sim \sum_j \beta_j \tilde{W}^\dagger_j \tilde{V}_j$ for any finite sets of words $\{W_i\}_i$, $\{V_i\}_i$, $\{\tilde{W}_j\}_j$, $\{\tilde{V}_j\}_j$ with word lengths at most $k$, in addition to state normalization. It follows immediately that $\text{NPA}_k$ is a non-empty affine subspace of $\bbC^{|\Gamma_{\leq k}|\times|\Gamma_{\leq k}|}$. Note furthermore that the NPA constraints necessarily require the diagonal entries of $Q_k$ to be one, since all words are necessarily unitary and therefore $Q_{W,W} = \bra{\psi}W^\dagger W\ket{\psi} = 1$.

Finally, to recover $p_{\mcG}(\mcS)$ from $Q_k$, we observe that $p_{\mcG}(\mcS)$ is nothing but an affine combination of the entries of $Q_k$, specifically $(Q_k)_{\bbI,W}$ and $(Q_k)_{V,\bbI}$ for $V,W \in \Gamma$. Consquently, we can define a Hermitian matrix $A \in \bbC^{|\Gamma_{\leq k}|\times|\Gamma_{\leq k}|}$ such that $p_{\mcG,q}(\mcS)=\frac{1}{2}+\frac{1}{6}\Tr[AQ_k]$, where
\[A = \begin{cases}
    A_{\bbI,W}=A_{V,\bbI} = \frac{1}{2}q_B ~~ \text{ for } V,W \in \{\mcA^X\mcB^X,\mcA^Y\mcB^Y,\mcA^Z\mcB^Z\}\\
    A_{\bbI,W}=A_{V,\bbI} = \frac{1}{2}q_C ~~ \text{ for } V,W \in \{\mcA^X\mcC^X,\mcA^Y\mcC^Y,\mcA^Z\mcC^Z\}\\
\end{cases}\]

Combining all constraints together, we arrive at the $k$-th level semidefinite program $S_k$ for the NPA hierarchy of the game $\mcG$:
\begin{align*}
    \text{maximize} ~~~  &p_{\mcG}(\mcS)=\frac{1}{2}+\frac{1}{6}\Tr[AQ_k]\\
    \text{subject to} ~~~ & Q_k \in \mcP(\bbC^{|\Gamma_{\leq k}|})\\
    & Q_k \in \text{NPA}_k
\end{align*}

The semidefinite program $S_k$ provides an upper bound for the value $\omega(\mcR,q)$, with the bound getting tighter as $k\to \infty$. The full NPA hierarchy (as in \cite{NPA:Navascues08,NPA:Pironio10}) converges to the commuting value of the OTG game $\mcG$, which in general, need not be equal to the value $\omega(\mcR,q)$.


\section{Proof of Results from Section~\ref{sec:OTG_Games}}\label{sec:Appendix_Proofs}
In this section, we prove the results of Section~\ref{sec:OTG_Games}. We will restate all of the theorems for convenience.

\MoEValues*

\begin{proof}
    We will prove the claims in order.
    
    \textbf{(1):} Suppose $\mcG_m$ is the MoE game with referee configuration $\mcR=(\mcH_R,\Theta,p_\Theta,\mcX,\{\mcM^\theta\}_{\theta \in \Theta})$ and $m$ players. The condition that $m$ players simultaneously guess the referee' measurement outcome is more restrictive than $\ell$ players simultaneously guessing the referee's measurement outcome, if $\ell<m$. We make this statement more precise. If $\ell<m$ and $\mcS_m$ is a strategy for the $m$-player MoE game $\mcG_m$, we can form a strategy $\mcS_{\ell}$ for the $\ell$-player MoE game $\mcG_{\ell}$ by simply disregarding the last $m-\ell$ players.
\begin{align*}
    \mcS_m &= (\mcH_{P_1...P_{m}},\rho_{RP_1...P_{m}},\{\mcP^\theta_i\}_{\theta \in \Theta} : i \in [m])\\
    \implies \mcS_{\ell}&=(\mcH_{P_1...P_{\ell}},\rho_{RP_1...P_{\ell}},\{\mcP^\theta_i\}_{\theta \in \Theta} : i \in [\ell])
\end{align*}

The probability of winning the $\ell$-player MoE game $\mcG_{\ell}$ under strategy $\mcS_{\ell}$ is
\begin{align*}
    p_\mcG(\mcS_\ell) &= \sum_{\theta \in \Theta} p_\Theta(\theta) \sum_{x \in \mcX} \Tr\left[\left(\mcM^\theta_x \otimes \mcP^\theta_{1,x} \otimes \ldots \otimes \mcP^\theta_{\ell,x} \right)\rho_{RP_1...P_{\ell}}\right]\\
    &= \sum_{\theta \in \Theta} p_\Theta(\theta) \sum_{x \in \mcX} \Tr\left[\left(\mcM^\theta_x \otimes \mcP^\theta_{1,x} \otimes \ldots \otimes \mcP^\theta_{\ell,x} \otimes \bbI_{P_{\ell+1}} \otimes \ldots \otimes  \bbI_{P_m}\right)\rho_{RP_1...P_{m}}\right]\\
    &\geq \sum_{\theta \in \Theta} p_\Theta(\theta) \sum_{x \in \mcX} \Tr\left[\left(\mcM^\theta_x \otimes \mcP^\theta_{1,x} \otimes \ldots \otimes \mcP^\theta_{\ell,x} \otimes \mcP^\theta_{\ell+1,x} \otimes \ldots \otimes  \mcP^\theta_{m,x}\right)\rho_{RP_1...P_{m}}\right]\\
    &= p_\mcG(\mcS_m).
\end{align*}

Taking the supremum over all strategies $\mcS_m$ for the $m$-player MoE game $\mcG_m$ proves that $\omega_m(\mcR)\leq \omega_\ell(\mcR)$ whenever $\ell<m$. 

\textbf{(2):} Any $\underline{d}$-bounded strategy for the MoE game $\mcG_m$ is automatically a $\underline{d}'$-bounded strategy for $\mcG_m$ whenever $\underline{d}\preceq \underline{d}'$, by definition. Consequently, $p_{\mcG_m}(\mcS) \leq \omega_m(\mcR,\underline{d}')$ for any $\underline{d}$-bounded strategy $\mcS$, and taking the supremum over all $\underline{d}$-bounded strategies $\mcS$ proves that $\omega_m(\mcR,\underline{d}) \leq \omega_m(\mcR,\underline{d}')$ whenever $\underline{d}\preceq \underline{d}'$. Trivially, any $\underline{d}'$-bounded strategy for $\mcG_m$ is also just an ordinary strategy for $\mcG_m$, so by the same reasoning $\omega_m(\mcR,\underline{d})\leq \omega_m(\mcR)$.

\textbf{(3):} Consider any $D$-bounded $\mcS$ for $\mcG_m$. We construct a randomized strategy $\mcS^{\text{rand}}$ over all permutations of the players' subsystems:
\begin{align*}
    \mcS&=(\mcH_{P_1...P_m},\rho_{RP_1...P_m},\{\mcP^\theta_i\}_{\theta \in \Theta}: i \in [m])\\
    \mcS^{\text{rand}} &= (C_{P_1...P_m} \otimes\mcH_{P_1...P_{m}},\tilde{\rho}_{RP_1...P_{m}},\{\tilde{\mcP}^\theta_i\}_{\theta \in \Theta} : i \in [m])
\end{align*}

where $C_{P_i}$ is a classical register with $m$ possible states $\{\ket{j}\bra{j}\}_{j=1}^m$, and both the shared state $\tilde{\rho}_{RP_1...P_m}$ and player POVMs $\{\tilde{\mcP}^\theta_i\}_{\theta \in \Theta}$ are constrcuted from $\mcS$.
\begin{align*}
\tilde{\rho}_{RP_1...P_m} &= \frac{1}{m!}\sum_{\sigma \in S_m} \ket{\sigma(1)}\bra{\sigma(1)}_{P_1} \otimes \ket{\sigma(2)}\bra{\sigma(2)}_{P_2} \otimes \ldots \otimes \ket{\sigma(m)}\bra{\sigma(m)}_{P_m} \otimes\rho_{RP_{\sigma(1)}...P_{\sigma(m)}}\\
\tilde{\mcP}^\theta_{i,x} &= \sum_{j=1}^m \ket{j}\bra{j}_{P_i} \otimes \mcP^\theta_{j,x}
\end{align*}
Notice that $\tilde{\rho}_{RP_1...P_m}$ is invariant under reordering the players, and that the POVM elements $\tilde{\mcP}^\theta_i$ are the same for each player under relabeling the classical register $P_i$. Also, note that since all players' subsystems have dimension at most $D$, permuting the players' POVMs does not make any players' subsystem exceed the dimension bound $D$. Thus, the strategy is symmetric and $D$-bounded, so it suffices to show that the strategy $\mcS^{\text{rand}}$ achieves the same winning probability as the strategy $\mcS$.
\begin{align*}
    p_{\mcG_m}(\mcS^{\text{rand}}) &= \sum_{\theta \in \Theta} p_{\Theta}(\theta) \sum_{x \in \mcX} \Tr\left[\left(\mcM^\theta_x \otimes \tilde{\mcP}^\theta_{1,x} \otimes \ldots \otimes \tilde{\mcP}^\theta_{m,x}\right)\tilde{\rho}_{RP_1...P_m}\right]\\
    &= \frac{1}{m!} \sum_{\sigma \in S_m} \sum_{j \in [m]^m} \sum_{\theta \in \Theta} p_{\Theta}(\theta) \sum_{x \in \mcX} \Tr\Big[\Big( \ket{j_1}\bra{j_1} \otimes \ldots \otimes \ket{j_m}\bra{j_m} \otimes \mcM^\theta_x \otimes \mcP^\theta_{j_1,x} \otimes \ldots \otimes \mcP^\theta_{j_m,x}\Big) \\
    & ~~~~~~~~~~~~~~~~~~~~~~~~~~~~~~~~~~~~~~~~~~~~~~\Big(\ket{\sigma(1)}\bra{\sigma(1)} \otimes \ldots \otimes \ket{\sigma(m)}\bra{\sigma(m)} \otimes \rho_{RP_{\sigma(1)}...P_{\sigma(m)}}\Big)\Big] \\
    &= \frac{1}{m!} \sum_{\sigma \in S_m} \sum_{\theta \in \Theta} p_{\Theta}(\theta) \sum_{x \in \mcX} \Tr\left[\left(\mcM^\theta_x \otimes \mcP^\theta_{\sigma(1),x} \otimes \ldots \otimes \mcP^\theta_{\sigma(m),x}\right)\rho_{RP_{\sigma(1)}...P_{\sigma(m)}}\right] \\
    &= \sum_{\theta \in \Theta} p_{\Theta}(\theta) \sum_{x \in \mcX} \Tr\left[\left(\mcM^\theta_x \otimes \mcP^\theta_{1,x} \otimes \ldots \otimes \mcP^\theta_{m,x}\right)\rho_{RP_{1}...P_{m}}\right] \\
    &= p_{\mcG_m}(\mcS)
\end{align*}

Taking the supremum over all strategies $\mcS$ shows that randomized, symmetric $D$-bounded strategies for $\mcG_m$ can approximate $\omega_m(\mcR,D)$. In fact, due to the compactness of POVMs on a $D$-dimensional space, the value $\omega_m(\mcR,D)$ can be obtained exactly. Furthermore, since every strategy $\mcS$ is a $D$-bounded strategy for sufficiently large $D$ ($D \geq \max_i \dim(\mcH_{P_i})$), then the same argument can be used to show that randomized, symmetric strategies for $\mcG_m$ can approximate $\omega_m(\mcR)$.

\textbf{(4):} Let us consider the randomized strategy $\mcS$ for $\mcG_m$ with shared state $\rho_{RP_1...P_m} = \frac{1}{|\mcX|}\sum_{x \in \mcX} \ket{x}\bra{x}^{\otimes m}_{P_1...P_m} \otimes \frac{1}{d_{RP_1...P_m}}\bbI_{RP_1...P_m} \in \mcD(C_{P_1...P_m} \otimes \mcH_{RP_1...P_m})$, with the classical register $C_{P_i}$ being on $|\mcX|$ states $\{\ket{x}\bra{x}\}_{x \in \mcX}$. The players' POVMs will be identical, $\mcP^\theta_{i,x} = \sum_{y \in \mcX} \ket{y}\bra{y} \otimes \delta_{xy}\bbI_{P_i}$. When Alice selects $\theta \in \Theta$, the probability of simultaneous guessing is then
\begin{align*}
    \Pr\left[\text{Win}|\theta\right] &= \sum_{x \in \mcX} \Tr\left[\left(\mcM^\theta_x \otimes \mcP^\theta_{1,x} \otimes \ldots \otimes \mcP^\theta_{m,x}\right)\rho_{RP_1...P_m}\right]\\
    &= \frac{1}{d_{RP_1...P_m}|\mcX|}\sum_{x,z \in \mcX} \sum_{y \in \mcX^m} \Tr\left[\left(\ket{y}\bra{y} \otimes \mcM^\theta_x \otimes \delta_{xy_1}\bbI_{P_1} \otimes \ldots \otimes \delta_{xy_m}\bbI_{P_m}\right)\left(\ket{z}\bra{z}^{\otimes m} \otimes \bbI_{RP_1...P_m}\right)\right]\\
    &= \frac{1}{d_{RP_1...P_m}|\mcX|} \sum_{x \in \mcX} \Tr\left[ \mcM^\theta_x \otimes \bbI_{P_1...P_m}\right]\\
    &= \frac{1}{d_{RP_1...P_m}|\mcX|} \Tr\left[ \bbI_R \otimes \bbI_{P_1...P_m}\right]\\
    &= \frac{1}{|\mcX|}
\end{align*}

Thus $p_{\mcG_m}(\mcS)=\frac{1}{|mcX|}$ and therefore $\frac{1}{|\mcX|} \leq \omega_m(\mcR,\underline{d})$. Now, assume that $m\geq |\Theta|$. We will show that $\omega_m(\mcR)=\omega_c(\mcR)$ by generalizing the technique used in \cite[Theorem 4.1]{MOE:Johnston16}. Since $\omega_{c}(\mcR) \leq \omega_{\ell}(\mcR) \leq \omega_{m}(\mcR)$ whenever $m\leq \ell$, it suffices to prove the claim for $m=|\Theta|$. Take any bijective function $f:\Theta \to [m]$. Then, for any strategy $\mcS$, the following holds,
\begin{align*}
    p_{\mcG_m}(\mcS) &= \sum_{\theta \in \Theta}p_\Theta(\theta) \sum_{x \in \mcX} \Tr\left[\left(\mcM^\theta_x \otimes \mcP^\theta_{1,x} \otimes \ldots \otimes \mcP^\theta_{f(\theta),x} \otimes \ldots \otimes \mcP^\theta_{m,x} \right)\rho_{RP_1...P_m}\right]\\
    &\leq \sum_{\theta \in \Theta}p_\Theta(\theta) \sum_{x \in \mcX} \Tr\left[\left(\mcM^\theta_x \otimes \bbI_{P_1} \otimes \ldots \otimes \mcP^\theta_{f(\theta),x} \otimes \ldots \otimes  \bbI_{P_m} \right)\rho_{RP_1...P_m}\right]\\
    &= \sum_{\theta \in \Theta}p_\Theta(\theta) \sum_{x \in \mcX} \Tr\left[\left(\mcM^\theta_x \otimes \mcP^\theta_{f(\theta),x} \right)\rho_{RP_{f(\theta)}}\right]
\end{align*}
Now, we construct a randomized classical strategy $\mcS_c$ which achieves this upper bound for $p_{\mcG_m}(\mcS)$. Let $\sigma_{RP_1...P_m} = \sum_{y \in \mcX^m} p_y\sigma_{R,y}  \otimes \ket{y}\bra{y}^{\otimes m}\in \mcD(\mcH_R \otimes C_{P_1...P_m})$ be a classical-quantum state, where $C_{P_i}$ is a classical register with $|\mcX|$ states, $p_y=\Tr\left[\left(\bbI_R \otimes \mcP^{f^{-1}(1)}_{1,y_1} \otimes \ldots \otimes \mcP^{f^{-1}(m)}_{m,y_m}\right)\rho_{RP_1...P_m}\right]$ is the joint probability that player $i$ receives the outcome $y_i \in \mcX$ upon measuring with POVM $\mcP^{f^{-1}(i)}_i$, and $\sigma_{R,y} = \frac{1}{p_y}\Tr_{P_1...P_m}\left[\left(\bbI_{R} \otimes \mcP^{f^{-1}(1)}_{1,y_1} \otimes \ldots \otimes \mcP^{f^{-1}(m)}_{m,y_m}\right)\rho_{RP_1...P_m}\right]$ is the post-measurement state on the referee's subsystem when the sequence of player measurement outcomes is $y=(y_1,\ldots,y_m)$. Observe that the state $\sigma_{RP_1...P_m}$ is seperable between the referee and the players, so therefore it is not entangled. Each player has access to one share of the parameter $y \in \mcX^m$ in the state $\sigma_{RP_1...P_m}$. The players measure their classical register with the POVM $\tilde{\mcP}^{\theta}_{i,x} = \sum_{y \in \mcX^m} \delta_{xy_{f(\theta)}}\ket{y}\bra{y}$. The new winning probability under this classical strategy $\mcS_c$ is then,
\begin{align*}
    p_{\mcG_m}(\mcS_c) &= \sum_{\theta \in \Theta}p_\Theta(\theta) \sum_{x \in \mcX} \Tr\left[\left(\mcM^\theta_x \otimes \tilde{\mcP}^\theta_{1,x} \otimes \ldots \otimes \tilde{\mcP}^\theta_{m,x} \right)\sigma_{RP_1...P_m}\right]\\
    &= \sum_{y,y^{(1)},\ldots,y^{(m)} \in \mcX^m} ~~\sum_{\theta \in \Theta}p_\Theta(\theta) \sum_{x \in \mcX} \delta_{xy^{(1)}_{f(\theta)}}\ldots\delta_{xy^{(m)}_{f(\theta)}}\\
    & ~~~~~~~~~~~~~~~~~~~~~~~~~~~~~~~~~~\Tr\left[\left(\mcM^\theta_x \otimes \ket{y^{(1)}}\bra{y^{(1)}} \otimes \ldots \otimes \ket{y^{(m)}}\bra{y^{(m)}} \right)\left(p_y \sigma_{R,y} \otimes \ket{y}\bra{y}^{\otimes m}\right)\right]\\
    &= \sum_{y \in \mcX^m} \sum_{\theta \in \Theta} p_{\Theta}(\theta) p_y\Tr\left[\mcM^\theta_{y_{f(\theta)}}\sigma_{R,y}\right]\\
    &= \sum_{y \in \mcX^m} \sum_{\theta \in \Theta} p_{\Theta}(\theta)\Tr\left[\mcM^\theta_{y_{f(\theta)}}\Tr_{P_1...P_m}\left[\left(\bbI_{R} \otimes \mcP^{f^{-1}(1)}_{1,y_1} \otimes \ldots \otimes \mcP^{f^{-1}(m)}_{m,y_m}\right)\rho_{RP_1...P_m}\right]\right]\\
    &= \sum_{y \in \mcX^m} \sum_{\theta \in \Theta} p_{\Theta}(\theta)\Tr\left[\left(\mcM^\theta_{y_{f(\theta)}} \otimes \mcP^{f^{-1}(1)}_{1,y_1} \otimes \ldots \otimes \mcP^{f^{-1}(m)}_{m,y_m}\right)\rho_{RP_1...P_m}\right]\\
    &= \sum_{y_{f(\theta)} \in \mcX} \sum_{\theta \in \Theta} p_{\Theta}(\theta)\Tr\left[\left(\mcM^\theta_{y_{f(\theta)}} \otimes \bbI_{P_1} \otimes \ldots  \otimes \mcP^{f^{(-1)}(f(\theta))}_{f(\theta),y_{f(\theta)}}\otimes \ldots \otimes \bbI_{P_m}\right)\rho_{RP_1...P_m}\right]\\
    &= \sum_{y_{f(\theta)} \in \mcX} \sum_{\theta \in \Theta} p_{\Theta}(\theta)\Tr\left[\left(\mcM^\theta_{y_{f(\theta)}} \otimes \mcP^{\theta}_{f(\theta),y_{f(\theta)}}\right)\rho_{RP_{f(\theta)}}\right]\\
    &= \sum_{x \in \mcX} \sum_{\theta \in \Theta} p_{\Theta}(\theta)\Tr\left[\left(\mcM^\theta_{x} \otimes \mcP^{\theta}_{f(\theta),x}\right)\rho_{RP_{f(\theta)}}\right]
\end{align*}
which is exactly the upper bound for $p_{\mcG_m}(\mcS)$. Therefore, by taking the supremum over all strategies for $\mcG_m$, we have $\omega_m(\mcR)\leq \omega_c(\mcR)$, which implies $\omega_m(\mcR)=\omega_c(\mcR)$.

\textbf{(5):} Let $\mcS$ be a classical-quantum strategy for $\mcG_m$, and without loss of generality assume that the first player is classical, i.e. $d_1=1$. Then, $\mcP^\theta_{1,x} = p^\theta_x \in \bbR$, where $p^\theta_x \geq 0$ and $\sum_{x \in \mcX}p^\theta_x=1$ for each $\theta \in \Theta$. Thus,
\begin{align*}
    p_{\mcG_m}(\mcS) &= \sum_{\theta \in \Theta}p_{\Theta}(\theta) \sum_{x \in \mcX} \Tr\left[\left(\mcM^\theta_x \otimes \mcP^\theta_{1,x} \otimes \ldots \otimes \mcP^\theta_{m,x}\right) \rho_{RP_1...P_m}\right]\\
    &\leq \sum_{\theta \in \Theta}p_{\Theta}(\theta) \sum_{x \in \mcX} \Tr\left[\left(\mcM^\theta_x \otimes \mcP^\theta_{1,x} \otimes \bbI_{P_2} \otimes \ldots \otimes \bbI_{P_m}\right) \rho_{RP_1...P_m}\right]\\
    &= \sum_{\theta \in \Theta}p_{\Theta}(\theta) \sum_{x \in \mcX} \Tr\left[\left(\mcM^\theta_x \otimes \mcP^\theta_{1,x}\right)\rho_{RP_1}\right]\\
    &= \sum_{\theta \in \Theta}p_{\Theta}(\theta)\sum_{x \in \mcX} \Tr\left[p^\theta_x\mcM^\theta_x\rho_R\right]\\
    &\leq \omega_c(\mcR)
\end{align*}
where the last inequality holds since the first player is classical, and shares no entanglement with the referee due to having a subsystem of dimension one.
\end{proof}

\gameValues*

\begin{proof}
    We will prove the claims in order.

    \textbf{(1):} Let $p,q$ be probability distributions on $[m]$ such that $p \preceq q$. A standard result of majorization theory (see, for instance, \cite{Majorization:Marshall79}) states that there exists a doubly stochastic matrix $\Pi \in \bbR^{m \times m}$ such that $p=\Pi q$ (where $p,q$ are interpreted as column vectors in $\bbR^m$). Furthermore, the Birkhoff-von Neumann theorem states that every doubly stochastic matrix $\Pi \in \bbR^{m \times m}$ can be expressed as a convex combination of permutation matrices $\Pi_{\sigma}$ for $\sigma \in S_m$.
    \[(\Pi_{\sigma})_{ij} = \begin{cases}
        1 \text{ if } i=\sigma(j)\\
        0 \text{ if } i \neq \sigma(j)
    \end{cases}\]
    So in general, we may write $\Pi = \sum_{\sigma \in S_m} r_{\sigma} \Pi_\sigma$ for some probability mass function $r_\sigma$ on $S_m$, and consequently $p_i = \sum_{j=1}^m \sum_{\sigma \in S_m } r_\sigma(\Pi_{\sigma})_{ij}q_j$.

    Now, let $\mcS_p$ be an arbitrary strategy for the OTG game $\mcG_p$ with referee configuration $\mcR$ and player selection probability $p$. We will assume without loss of generality that each player has a subsystem of dimension $D \in \bbN$, as any subsystem of dimension $d_i \leq D$ can be embedded into a $D$-dimensional Hilbert space faithfully. We will show that there exists a strategy $\mcS_q$ for $\mcG_q$ with referee configuration $\mcR$ and player selection probability $q$, that ``simulates" $\mcS_p$. First, we'll write $R_i$ to denote the winning probability of the $i$-th player, namely $R_i=\sum_{\theta \in \Theta}p_{\Theta}(\theta)\sum_{x \in \mcX}\Tr\left[\left(\mcM^\theta_x \otimes \mcP^\theta_{i,x}\right) \rho_{RP_i}\right]$, where the state $\rho_{RP_1...P_m}$ and the POVMs $\mcP^\theta_i$ come from the strategy $\mcS_p$. Then, the winning probability of strategy $\mcS_p$ in $\mcG_p$ can be concisely expressed as $p_{\mcG_p}(\mcS_p) = \sum_{i=1}^m p_iR_i$. Now, we will construct a randomized strategy $\mcS_q$ from $\mcS_p$, in which player $i$ simulates the guessing behavior of player $j$ with probability $\Pi_{ji}$. The new strategy $\mcS_q$ will use the classical-quantum state $\tau_{RP_1...P_m}$, where
    \[\tilde{\rho}_{RP_1...P_m} = \sum_{\sigma \in S_m} r_{\sigma} \ket{\sigma(1)}\bra{\sigma(1)}_{P_1} \otimes \ldots \otimes \ket{\sigma(m)}\bra{\sigma(m)}_{P_m} \otimes \rho_{RP_{\sigma(1)}...P_{\sigma(m)}} \in \mcD(C_{P_1...P_m} \otimes \mcH_{RP_1...P_m})\]
    and $C_{P_i}$ is a classical register with $m$ states. The player POVMs $\tilde{\mcP}^\theta_i$ are also from $\mcS_p$:
    \[\tilde{\mcP}^{\theta}_{i,x} = \sum_{j=1}^m \ket{j}\bra{j}_{P_i} \otimes \mcP^{\theta}_{j,x}\]
    Under the new strategy $\mcS_q$, we show that the winning probability in the OTG game $\mcG_q$ is the same as the winning probability of the strategy $\mcS_p$ in the OTG game $\mcG_p$.
    \begin{align*}
        p_{\mcG_q}(\mcS_q) &= \sum_{\theta \in \Theta}\sum_{i=1}^mp_{\Theta}(\theta)q_i \sum_{x \in \mcX} \Tr\left[\left(\mcM^\theta_{x} \otimes \tilde{\mcP^\theta_{i,x}}\right)\tilde{\rho}_{RP_i}\right]\\
        &= \sum_{\theta \in \Theta}\sum_{i,j=1}^m\sum_{\sigma \in S_m}p_{\Theta}(\theta)q_i r_\sigma\sum_{x \in \mcX} \Tr\left[\left(\ket{j}\bra{j} \otimes \mcM^\theta_{x} \otimes \mcP^\theta_{j,x}\right)\left(\ket{\sigma(i)}\bra{\sigma(i)} \otimes \rho_{RP_{\sigma(i)}}\right)\right]\\
        &= \sum_{\theta \in \Theta}\sum_{i=1}^m\sum_{\sigma \in S_m}p_{\Theta}(\theta)q_i r_\sigma\sum_{x \in \mcX} \Tr\left[\left(\mcM^\theta_{x} \otimes \mcP^\theta_{\sigma(i),x}\right) \rho_{RP_{\sigma(i)}}\right]\\
        &= \sum_{\sigma \in S_m}\sum_{i=1}^mq_i r_\sigma R_{\sigma(i)} = \sum_{\sigma \in S_m}\sum_{i=1}^mq_{\sigma^{-1}(i)} r_\sigma R_{i} = \sum_{\sigma \in S_m}\sum_{i=1}^m \left(\Pi_\sigma q\right)_i r_\sigma R_{i}\\
        &= \sum_{i=1}^m \left(\sum_{\sigma \in S_m} r_\sigma\Pi_\sigma q\right)_i R_{i} = \sum_{i=1}^m \left(\Pi q\right)_i R_{i} = \sum_{i=1}^m p_i R_{i} = p_{\mcG_p}(\mcS_p)
    \end{align*}
    Therefore, any arbitrary strategy $\mcS_p$ for the OTG game $\mcG_p$ can be simulated by a strategy $\mcS_q$ for the OTG game $\mcG_q$, achieving the same winning probability. Taking supremums, we have $\omega(\mcR,p) \leq \omega(\mcR,q)$. Since the strategy $\mcS_q$ only involves an overhead of classical randomness in relation to $\mcS_p$, the dimension of each subsystem in the strategy $\mcS_q$ is still bounded above by $D$. Hence, $\omega(\mcR,p,D) \leq \omega(\mcR,q,D)$.

    \textbf{(2):} The proof is exactly the same as the case for MoE games. Every $\underline{d}$-bounded strategy for $\mcG$ is automatically a $\underline{d}'$-bounded strategy for $\mcG$, and a $\underline{d}'$-bounded strategy for $\mcG$ is, trivially, a strategy for $\mcG$. Hence we arrive at the chain of inequalities $\omega(\mcR,q,\underline{d})\leq \omega(\mcR,q,\underline{d}')\leq \omega(\mcR,q)$ by taking supremums.

    \textbf{(3):} The claim actually follows from claim (1). Since $q_{\text{Unif}} = \Pi q_{\text{Unif}}$ with $\Pi=\frac{1}{m!}\sum_{\sigma \in S_m}\Pi_\sigma$, then starting from any strategy $\mcS$, we can perform the simulation procedure of (1) to obtain a strategy $\mcS'$ that achieves the same winning probability in the OTG game $\mcG$ as strategy $\mcS$. In the strategy $\mcS'$, the player POVMs are identical, and because $\Pi$ is a uniform combination of all permutation matrices, the shared state $\tilde{\rho}_{RP_1...P_m}$ is invariant under permuting the player indices. Thus, any strategy $\mcS$ can be simulated with a symmetric strategy $\mcS'$, achieving the same winning probability. Taking the supremum over all strategies, it follows that $\omega(\mcR,q)$ can be approximated with symmetric strategies. Additionally, since only an overhead of classical randomness is incurred in going from $\mcS$ to $\mcS'$, it follows that $\omega(\mcR,q,D)$ can be approximated with symmetric strategies. In fact, due to compactness, $\omega(\mcR,q,D)$ can be achieved exactly with symmetric strategies. 

    \textbf{(4):} The three Pauli-basis OTG game $\mcG$ of Section~\ref{sec:PauliOTG} gives an example of an OTG game in which the value, classical value, and bipartite value are all distinct, namely
    \[\omega_c(\mcR,q_{\text{Unif}}) = \frac{1}{2}+\frac{\sqrt{3}}{6}<\omega_{bpe}(\mcR,q_{\text{Unif}}) = \frac{1}{2}+\frac{\sqrt{6}}{8}<\omega(\mcR,q_{\text{Unif}})=\frac{5}{6}.\]
    Hence, the OTG game $\mcG$ has both a quantum and super bipartite advantage.
\end{proof}

\compareValues*

\begin{proof}
    The first inequality follows from Theorem~\ref{thm:moe_values}, whereas the second to last inequality follows from Theorem~\ref{thm:game_values}. The last inequality is trivial, and is equivalent to the statement that any strategy $\mcS$ wins an OTG game $\mcG$ with probability at most one. It remains for us to prove the second inequality and the middle equality.

    Let $\mcG_m$ be the $m$-player MoE game with referee configuration $\mcR$, and let $\mcG_q$ be the OTG game with referee configuration $\mcR$ and player selection distribution $q$ on $m$-players. We'll consider an arbitrary strategy $\mcS$, and observe the winning probability of $\mcS$ in each game.
    \begin{align*}
        p_{\mcG_m}(\mcS) &= \sum_{\theta \in \Theta}p_\Theta(\theta)\sum_{x \in \mcX} \Tr\left[\left(\mcM^\theta_x \otimes \mcP^\theta_{1,x} \otimes \ldots \otimes \mcP^\theta_{i,x} \otimes \ldots \otimes \mcP^\theta_{m,x}\right)\rho_{RP_1...P_m}\right]\\
        &\leq \sum_{\theta \in \Theta}p_\Theta(\theta)\sum_{x \in \mcX} \Tr\left[\left(\mcM^\theta_x \otimes \bbI_{P_1} \otimes \ldots \otimes \mcP^\theta_{i,x} \otimes \ldots \otimes \bbI_{P_m} \right)\rho_{RP_1...P_m}\right] ~~~ \text{for all } i \in [m]
    \end{align*}
    Since the inequality holds for all $i \in [m]$, it also holds for the index $i$ that minimizes the right-hand side of the second line, which is simply $p_{\mcR,\min}(\mcS)$. Consequently, $\omega_m(\mcP)\leq \omega_{m,\max\min}(\mcR)$. Now, we compare the minimum winning probability $p_{\mcR,\min}(\mcS)$ of the strategy $\mcS$ to the winning probability of the strategy $\mcS$ in the OTG game $\mcG_q$.
    \begin{align*}
        p_{\mcR,\min}(\mcS) &= \min_{i \in [m]}\sum_{\theta \in \Theta}p_\Theta(\theta)\sum_{x \in \mcX} \Tr\left[\left(\mcM^\theta_x \otimes \bbI_{P_1} \otimes \ldots \otimes \mcP^\theta_{i,x} \otimes \ldots \otimes \bbI_{P_m} \right)\rho_{RP_1...P_m}\right]\\
        &\leq \sum_{i=1}^m q_i\sum_{\theta \in \Theta}p_\Theta(\theta)\sum_{x \in \mcX} \Tr\left[\left(\mcM^\theta_x \otimes \bbI_{P_1} \otimes \ldots \otimes \mcP^\theta_{i,x} \otimes \ldots \otimes \bbI_{P_m} \right)\rho_{RP_1...P_m}\right]\\
        &= p_{\mcG_q}(\mcS)
    \end{align*}
    So the minimum winning probability os bounded above by the winning probability in the OTG game with player selection distribution $q$, hence $\omega_{m,\max\min}(\mcR)\leq \omega(\mcR,q)$. Choosing $q=q_{\text{Unif}}$ gives $\omega_{m,\max\min}(\mcR)\leq \omega(\mcR,q_{\text{Unif}})$. To prove that this inequality is actually an equality, we recall from Theorem~\ref{thm:game_values} that $\omega(\mcR,q_{\text{Unif}})$ can be approximated with symmetric strategies. Since all players have the same winning probability in a symmetric strategy, it follows that $p_{\mcR,\min}(\mcS)=p_{\mcG_q}(\mcS)$ for symmetric strategies $\mcS$, and taking the supremum over all symmetric strategies $\mcS$, we arrive at $\omega_{m,\max\min}(\mcR)=\omega(\mcR,q_{\text{Unif}})$.
\end{proof}

\end{document}